\theoremstyle{plain}
\newtheorem{theorem}{Theorem}[section]
\newtheorem{lemma}[theorem]{Lemma}
\newtheorem{assumption}{Assumption}
\newcommand*{\rom}[1]{\expandafter\@slowromancap\romannumeral #1@}
\newcommand{\E}{{\mathbb E}}
\newcommand{\bbP}{{\mathbb P}}
\newcommand{\var}{{\operatorname{Var}}}
\newcommand{\cov}{{\operatorname{Cov}}}
\newcommand{\tr}{{\operatorname{tr}}}
\newcommand{\ARL}{{\rm ARL}}
\newcommand{\beq}{\begin{equation}}
\newcommand{\eeq}{\end{equation}}
\newcommand{\Span}{\text{span}}
\newcommand{\Rank}{\text{Rank}}
\newcommand{\OO}{\operatorname{O}}
\newcommand{\tilSig}{\widetilde{\Sigma}}
\newcommand{\tilP}{\widetilde{P}}
\newcommand{\floor}[1]{\mbox{$\lfloor{#1}\rfloor$}}
\newcommand{\sqbrack}[1]{\mbox{$\left[{#1}\right]$}}
\newcommand{\cirbrack}[1]{\mbox{$\left({#1}\right)$}}
\title{Distribution-Free Online Change Detection for Low-Rank Images}
\author[1]{Tingnan Gong}
\author[1]{Seong-Hee Kim}
\author[1]{Yao Xie}
\affil[1]{H. Milton Stewart School of Industrial and Systems Engineering,\\
Georgia Institute of Technology}
\date{}
\begin{document}

\maketitle

\begin{abstract}
We present a distribution-free CUSUM procedure designed for online change detection in a time series of low-rank images, particularly when the change causes a mean shift. We represent images as matrix data and allow for temporal dependence, in addition to inherent spatial dependence, before and after the change. The marginal distributions are assumed to be general, not limited to any specific parametric distribution. We propose new monitoring statistics that utilize the low-rank structure of the in-control mean matrix. Additionally, we study the properties of the proposed detection procedure, assessing whether the monitoring statistics effectively capture a mean shift and evaluating the rate of increase in the average run length relative to the control limit in both the in-control and out-of-control cases. The effectiveness of our procedure is demonstrated through simulated and real data experiments.

\end{abstract}


\section{Introduction} \label{sec:intro}
In modern manufacturing, rapid improvements in sensor technology allow the industry to acquire data with dimensions much higher than those of decades ago. For example, images of layers in 3D printing \citep{Caltanissetta2023} can be obtained approximately every five seconds. For monitoring purposes, the image data can be viewed as matrix data. However, online monitoring of matrix data faces  multifaceted challenges:
\begin{itemize}
    \item \textit{Temporal independence:} Due to the high sampling rate, weak autocorrelation among consecutive images is nearly inevitable. Long-lasting autocorrelation is also common in specific industrial problems. For example, in the case of in-situ detection of laser power bed fusion (LPBF) \citep{gibson2021additive}, an anomaly called a hot spot may occur. When the laser beam repeatedly irradiates a thermally insulated region, heat builds up exceptionally quickly, leading to the formation of a hot spot. In the view of a thermographic camera, the pixels in the center of the overheating area remain hot (high intensity) while the edges slowly cool down, causing durable auto-correlation and nonstationary spatial correlation. 
    
    \item  \textit{Spatial independence: }
    Individual components of a high-resolution matrix tend to exhibit weak spatial correlation, thus violating the assumption of spatial independence. As mentioned in the aforementioned in-situ detection problem of LPBF, the hot spot anomaly can lead to unusual spatial correlation. 

    \item \textit{Data normality:} Violations of normality in data can result in a deviation from theoretical assumptions made by methods with the normality assumption, as discussed in \citep{yourstone1992non}.  \cite{chou1998transforming} tackle the normality issue by transforming low-dimensional vectors into nearly normally distributed data. However, this technique turns out to be computationally prohibitive in high dimensions. 

    \item \textit{High dimensionality:} Matrix data with large dimensions can drastically increase the complexity of practical algorithm. Additionally, rank deficiency is common which may cause performance degradation if not properly handled, which calls for algorithms that are designed for low-rank matrices. Many techniques, combined with dimensionality reduction techniques such as the principal components analysis (PCA) \citep{de2015overview} and process projection and fusion \citep{zhang2020spatial} may improve computational and statistical efficiency of the methods. 
\end{itemize}


One way to handle high-dimensional matrix data is to adopt the profile monitoring perspective, as in \cite{woodall2004using}, \cite{woodall2007current} and \cite{maleki2018overview}. Procedures that employ this approach make a compromise by treating high-dimensional matrix data as a long vector of predictors, sacrificing the spatial data structure. Practitioners construct the response in single or multiple channels and develop appropriate models to characterize the functional relationship between the response and the predictors. The parametric regression model serves as one of the primary instruments in profile monitoring to describe this functional relationship. \cite{chang2006monitoring}, \cite{zou2007monitoring}, \cite{zhu2009monitoring} and \cite{noorossana2004monitoring} perform profile monitoring using linear regression models. \cite{kazemzadeh2009monitoring}, \cite{williams2007statistical}, and \cite{moguerza2007monitoring} introduce nonlinear profile monitoring to enhance interpretability. Other profile monitoring works utilize dimension reduction techniques to extract features, such as PCA and independent component analysis (ICA). \cite{colosimo2010comparison} discuss comparisons between regression-based and PCA-based profile monitoring procedures.

Another approach is to use wavelet-based distribution-free profile monitoring procedures \citep{lee2012monitoring,wang2015monitoring} for high-dimensional vectors. These methods can partially address the mentioned challenges, including general marginal distributions and high dimensionality. However, profile monitoring tends to alter the original matrix structure and results in a loss of spatial information. Additionally, none of the aforementioned works incorporates general temporal correlation.

To mitigate the loss of information caused by data structure transformation, some works exploit matrix characteristics directly and construct matrix-based monitoring procedures. \cite{megahed2011review} categorize various matrix-based monitoring procedures, including spatial and multivariate-matrix-analysis-based control charts.

Recent works either combine multiple popular techniques or develop new models on matrix data from images. Among the procedures that integrate multiple popular techniques, \cite{yan2014image} employ low-rank tensor decomposition to achieve dimension reduction and feature extraction and monitor the extracted features using Hotelling $T^2$ and Q control charts; the method does not strictly require data normality and independence. However, their Phase-\rom 1 calibration of the control limits is time-consuming due to the estimation of the empirical distribution of statistics. \cite{koosha2017statistical} and \cite{eslami2021spatial} utilize wavelet transformations on matrix data to extract features and then build control charts. \cite{koosha2017statistical} perform a generalized likelihood ratio (GLR) control chart on extracted features. \cite{eslami2021spatial} further extend the approach in \cite{koosha2017statistical} by combining it with a regression-based parametric model to accommodate underlying data correlation. However, they still require data normality. \cite{alaeddini2018spatiotemporal} compress data using partial least squares discriminant analysis and then construct the control statistics using Delaunay triangulation \citep{lee1980two} to segment the squared error matrix into triangles, computing the area within each triangle as control statistics. Despite the novelty of this technique, the assumption of data independence limits its broad application.  Region of interest (ROI) is a popular data compression technique for matrix processing. Based on this tool, \cite{megahed2012spatiotemporal} incorporate the GLR control chart to monitor the average intensity vector calculated from these regions, assuming the presence of at most one cluster of defects in the images.  \cite{he2016image} extend this approach further to detect multiple clusters of defects in images. \cite{amirkhani2020novel} consider the combination of ROIs and one-way analysis of variance (ANOVA).

 On the other hand, some procedures develop new models on the matrices. For a series of images with a smooth background, \cite{yan2017anomaly} propose a smooth-sparse decomposition (SSD) model to decompose observations into the background and potential sparse anomalies. However, the SSD model cannot incorporate temporal information into the data. Building on this work, \cite{yan2018real} extend the SSD methodology with a spatio-temporal smooth-sparse decomposition (ST-SSD) model to tackle temporal information and spatial patterns. 

Although initially not developed for matrix monitoring, some spatio-temporal monitoring charts \citep{jiang2011spatiotemporal, lee2014spatiotemporal, lee2015robust} used in the environment and public health surveillance can be adapted for matrix monitoring. Another thread of research addresses specific matrix monitoring problems in the industry. One of the applications that has garnered significant attention is matrix monitoring in metal additive manufacturing \citep{yan2022real, colosimo2018spatially, khanzadeh2019situ, liu2021integrated, guo2020hierarchical}.


In this paper, we propose the distribution-free low-rank image monitoring (DFLIM) procedure to perform online change detection for a time series of matrices. This work extends our earlier work \cite{gong2024distribution}, which considers a special case when the in-control mean image matrix is rank-one; here, we consider the general case where in-control mean image matrix can have a rank higher than one but still low rank. Our procedure's monitoring statistics are constructed based on singular value decomposition (SVD) of the in-control mean matrix and projected observations. These statistics are then used in the CUSUM recursion, which can be computed recursively online. We analyze the theoretical properties of the DFLIM procedure in terms of the in-control average run length ($\ARL_0$) and out-of-control average run length ($\ARL_1$). These metrics evaluate how frequently a false alarm occurs when a monitored process is in control and how quickly a change can be detected when the process is out of control.  Furthermore, we empirically study the capability of the procedure against temporal dependence and non-normality of the data. The effectiveness of the proposed procedure is demonstrated through simulated and real data experiments.

The remainder of the paper is organized as follows: Section \ref{sec:setup} describes our problem and assumptions. Section \ref{sec:procedure} proposes the DFLIM procedure, which utilizes a CUSUM chart on  $T^2$ statistics extracted from the high-dimensional matrix data. In Section \ref{sec:analysis}, we conduct theoretical analysis on the mean shift size in our statistics and study the ARL behavior of the DFLIM procedure. In Section \ref{sec:simulation}, we design simulated experiments designed to demonstrate the performance of the DFLIM procedure and empirically support some claims that are challenging to prove analytically. Section \ref{sec:real} applies the DFLIM procedure to real data sets to demonstrate its broad applications, followed by concluding remarks in Section \ref{sec:conclusion}.

\section{Problem Setup}\label{sec:setup}

At each time $t$, we observe a matrix denoted as $X_t \in \mathbb R^{p_1\times p_2}$. At an unknown change point $k\in\{1,\ldots,\infty\}$, the mean of $X_t$ shifts from $M_0$ to $M_1$. We can formulate the problem in terms of online hypothesis testing:
\begin{equation}\label{eq:hypotheses}
\begin{split}
\text{H}_0:&\ X_t = M_0 +\epsilon_t,\quad t=1,2,\ldots \\
\text{H}_1:&\ X_t = 
\begin{cases}
M_0 + \epsilon_t,\quad t=1,2,\ldots,k-1, \\
M_1 + \epsilon_t,\quad t=k,k+1,\ldots.
\end{cases}
\end{split}
\end{equation}
where $M_0$ and $M_1$ are $p_1$-by-$p_2$ matrices representing the in-control and out-of-control mean matrices, respectively. We assume that $p_1 \le p_2$, which can always be achieved by transposing the image matrices if necessary. 
The noise matrix {$\epsilon_t\in \mathbb R^{p_1\times p_2}$} is assumed to have the same marginal distribution across time $t$, whereas the parametric form of the distribution is unrestricted, allowing for {spatial correlation within  $\epsilon_t$ for each $t$ and temporal correlations across $\epsilon_t$ across time.} We denote expectations of the observations before and after the change as $\mathbb E_0[\cdot]$ and $\mathbb E_1[\cdot]$, respectively; The covariance functions $\cov_0(\cdot)$ and $\cov_1(\cdot)$ are defined similarly. {Note that in \eqref{eq:hypotheses}, all the randomness are due to $\epsilon_t$; however, the observations $X_t$ have different mean before and after the change.}

Before the change,  $X_t$ follows marginal distribution $F_0$ with mean 
\[\mathbb E_0[X_t] = M_0,\] where we assume $M_0$ to be known; this assumption is a reasonable simplification, since usually have in-control data to estimate the pre-change parameters accurately. For example, in a printed circuit board (PCB) manufacturing process, the in-control pattern is designed before production, or sufficient in-control data is available to accurately estimate the process parameters (further discussion can be found in Section~\ref{subsec:parameter}). 

After the change (i.e., $t \ge k$), the observations $X_t$ experience a mean shift 
\[\mathbb E_1 [X_t] = M_1 :=  M_0 + A,\] where $A \in\mathbb R^{p_1\times p_2}$ represents the unknown non-zero {and deterministic} mean shift. Throughout the paper, we assume that the following assumptions hold:
\begin{assumption}\label{assump:dist}
The marginal distributions of $X_t$ for $t=1,2,\ldots, k-1$ and $X_t -A$ for $t=k, k+1, \ldots$ are $F_0$; i.e., the marginal distribution of $\epsilon_t$ stays the same for all $t$.     
\end{assumption}

\begin{assumption}\label{assump:low_rank}
The in-control mean $M_0$ is low-rank, with rank $r \ll \min\{p_1,p_2\}$, and its SVD is given as follows:
$$
M_0 = \sum_{i=1}^r \lambda_i  u_i  v_i^\top
$$
where $\lambda_i$, $u_i$, and  $v_i$ represent the singular values, left singular vectors, and right singular vectors, respectively.
\end{assumption}

\section{Distribution-Free Low-Rank Image Monitoring}\label{sec:procedure}

Given the matrix data $X_t$, we extract two types of projections for detection statistics. First, using the left and right singular vectors, $u_i$ and $v_i$, of the in-control mean $M_0$, we compute the {\it first-type} projected observations $\beta_{i,t}$ for $i=1,\ldots, r$ as follows:
\begin{equation}\label{formula:features coordinates}
 \beta_{i,t} =  u_i^\top  X_t  v_i.
\end{equation}
Next, we calculate the residual matrix 
\[
{R}_t =  X_t- M_0,
\]
and perform its SVD to obtain its singular values in descending order. We use the first $r$ singular values as the {\it second-type} projected observations $\gamma_{i, t}$ for $i=1,\ldots, r$.

The statistic consisting of both types of projected observations at time $t$ is denoted as $y_t$:
\begin{equation}\label{formula:yt}
     y_t = \left[\beta_t^\top, \gamma_t^\top\right]^\top =  \left[\beta_{1,t},\ldots,\beta_{r,t},\gamma_{1,t},\ldots,\gamma_{r,t}\right]^\top\in\mathbb R^{2r},
\end{equation}
where $\beta_t = \left[\beta_{1,t},\ldots,\beta_{r,t}\right]^\top$ and $\gamma_t=\left[\gamma_{1,t},\ldots,\gamma_{r,t}\right]^\top$.
Note that $\beta_{i,t}$ is the projection of $X_t$ onto the static directions $u_i$ and $v_i$, whereas $\gamma_{i,t}$ represents the projection onto temporally varying directions, i.e., the singular vectors of $R_t$. We discuss why it is necessary to incorporate both types of projections into $y_t$ to ensure detection power in Section \ref{sec:analysis}.

To construct a CUSUM procedure, we compute a $T^2$-type statistics $T_t$ using $y_t$: 
\begin{equation}
\label{formula:T2-like series}
T_t = \left( y_t - \E_0 [y_t] \right)^\top  \cov_0^{-1}(y_t) \left( y_t - \E_0[y_t] \right),
\end{equation}
which will be used in the increment in the CUSUM statistics. Note that $\cov_0(y_t)$ is shorthanded for $\cov_0(y_t, y_t)$, the covariance matrix of the vector $y_t$. 

The CUSUM statistics for the DFLIM procedure are defined recursively with $S_0 = 0$:
\begin{equation}\label{formula:dist_free_cusum}
            S_t = \max\left\{0, S_{t-1} + \left(T_t - \E_0[ T_t] - c\sigma_T\right) \right\}, \quad t=1,\ldots,
\end{equation}
where $c$ is a pre-selected constant, and $\sigma_T>0$ is the in-control marginal standard deviation of $T_t$ (under the marginal distribution $F_0$). The DFLIM procedure stops and raises an out-of-control signal at time $\tau$  when the monitoring statistic exceeds a control limit $H >0$:
\begin{equation}
    \tau= \inf\{t>0: S_{t}\geq H\}. 
\end{equation}
Given the stopping time $\tau$, we define $\ARL_0=\E_0[\tau]$ and $\ARL_1=\E_1[\tau]$. Here, $\ARL_0$ represents the average time to raise a false alarm when the process is in-control. 
$\ARL_1$ measures how quickly a monitoring procedure can detect a change when the process is  out-of-control.

The detailed description of the proposed DFLIM procedure is given in Algorithm \ref{algorithm:Distribution-Free CUSUM}. Note that in the inputs of Algorithm~\ref{algorithm:Distribution-Free CUSUM}, the in-control parameters, including $M_0$, $\E_0[y_t]$, and $\cov_0(y_t)$, are assumed to be known for theoretical analysis. In practice, the in-control parameters are estimated from in-control data, as described in Section~\ref{subsec:parameter} and Algorithm~\ref{algorithm:calibration on control limit}.

\begin{algorithm}[H]
\caption{Distribution-free Low-rank Image Monitoring}
\label{algorithm:Distribution-Free CUSUM}
\textbf{Input:} Sequence of observations $\{X_t,t = 1,2,\ldots\}$, in-control mean matrix $M_0$, in-control mean vector $\E_0[y_t]$, in-control covariance $\cov_0(y_t)$, constant $c$, standard deviation $\sigma_T$, and control limit $H$ (whose determination is discussed in Section~\ref{subsec:H}).\\
\textbf{Output:}  stopping time $\tau$. 
\begin{algorithmic}[1]
\State Initialize $\tau = +\infty$, $t = 0$, and $S_0 = 0$. 
\State Perform SVD on $M_0$ to obtain directions $u_i$ and $v_i$ for projections with $i=1,\ldots,r$. 

\While{$S_t<H$}
    \State Set $t=t+1$ and obtain $X_t$.
    \State Compute $\beta_{i,t}$ for $i=1,\ldots,r$ as in \eqref{formula:features coordinates}. 
    \State Obtain $r$ largest singular values of $R_t=X_t-M_0$, i.e.
    $\gamma_{i,t}$ for $i=1,\ldots,r$.
    \State Form $y_t$ as in \eqref{formula:yt} and compute $T_t$ as in \eqref{formula:T2-like series}. 
    \State Update the monitoring statistic $S_t$ as in \eqref{formula:dist_free_cusum}. 
\EndWhile
 \State Set $\tau = t$ and raise an out-of-control alarm. 
\end{algorithmic}
\end{algorithm}

\subsection{Control limit determination and setup phase} \label{subsec:H}

In this section, we derive an expression for ARLs of the DFLIM procedure and explain how this expression can be used to determine the control limit $H$. Additionally, we explain how to estimate the parameters necessary for implementing the DFLIM procedure.

Define the mean and variance parameters of $T_t$ as follows:
\begin{equation*}\label{formula:mean and variance of T2}
    m := \E [T_t], \quad \Omega^2 := \lim _{t \rightarrow \infty} t \var \left(\frac{\sum_{\ell=1}^t T_\ell}{t}\right)= \sum_{t = -\infty}^\infty \cov(T_0, T_t),
\end{equation*}
where the expectation and the variance can be taken under either in-control or out-of-control phase. Here, $\Omega^2$ represents the limiting variance parameter for $T_t$, which provides a better measure of process variability compared to marginal variance in the presence of temporal correlations. We can then define the standardized time-series of the first $t$ observations $\{T_1, T_2, \ldots, T_t\}$ as follows: 
\begin{equation*}
\mathcal{C}_t(s) \equiv \frac{\sum_{\ell=1}^{\lfloor t s\rfloor} T_\ell-t s m}{ \sqrt{t\Omega^2}}, \quad s \in[0,1]. 
\end{equation*}
In addition to Assumptions~\ref{assump:dist} and \ref{assump:low_rank}, we assume that $\{T_t : t=1,2,\ldots\}$ satisfies the Functional Central Limit Theorem (FCLT): 
\begin{assumption}[FCLT]
\label{asum:FCLT}
Given $\left\{T_t: t=1,2, \ldots\right\}$, the standardized time-series process $\mathcal{C}_t(\cdot)$ satisfies:
\begin{equation*}
    \mathcal{C}_t(\cdot) {\stackrel{\mathcal D}{\longrightarrow}} \mathcal{W}(\cdot)\quad\text{as}\quad t\to\infty,
\end{equation*}
in the space $D[0,1]$ where $\stackrel{\mathcal D}{\to}$ denotes convergence in distribution, $\mathcal W(\cdot)$ denotes a standard Brownian process, and the space $D[0,1]$ contains functions defined on $[0,1]$ that are right-continuous with left-hand limits.
\end{assumption}
For conditions under which the FCLT is applicable, one can refer to \cite{glynn1985large}. In Chapter 4.4 of \cite{whitt2002stochastic}, it is suggested that, in practical terms, it is generally justifiable to presume the validity of the FCLT when $\Omega^2$ is finite.

Under the FCLT, \cite{kim2007distribution} show that the limiting process of CUSUM statistics is closely related to a standard Brownian motion process, regardless of the parametric form of $X_t$, forming the basis for a distribution-free procedure. Furthermore, they derive an approximate expression of ARLs, including both $\ARL_0$ and $\ARL_1$, based on the properties of the converged process of CUSUM statistics. The DFLIM procedure also achieves the distribution-free property under the FCLT and determines the control limit $H$ using the approximate expression of ARLs of the converged process, as provided in Lemma~\ref{lem:ARL}. Similar results were reported in \cite{bagshaw1975effect}, however without a clear characterization of the process variability using $\Omega^2$.

\begin{lemma}[\cite{kim2007distribution}]
\label{lem:ARL}
If $\left\{T_t: t=1,2, \ldots\right\}$ satisfies Assumption \ref{asum:FCLT}, then
\begin{equation*}
\begin{aligned}
& \ARL \approx 
\begin{cases}
H^2 / \Omega^2, & \text { if } d_T=0, \\
\frac{\Omega^2}{2 d_T^2}\left[\exp \left(-\frac{2Hd_T}{\Omega^2}\right)-1+\frac{2Hd_T}{\Omega^2}\right], & \text { otherwise},
\end{cases}
\end{aligned}
\end{equation*}
for large $H$, where $d_T = \E[T_t] - \E_0[ T_t] - c\sigma_T$.
\end{lemma}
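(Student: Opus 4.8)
The plan is to reduce the discrete CUSUM recursion to a reflected Brownian motion by way of the FCLT and then to compute an expected first-passage time in closed form. First I would rewrite the recursion \eqref{formula:dist_free_cusum} in terms of the per-step increment $Z_\ell := T_\ell - \E_0[T_\ell] - c\sigma_T$, which has mean $\E[Z_\ell] = d_T$ under whichever phase is in force. Unrolling $S_t = \max\{0, S_{t-1} + Z_t\}$ from $S_0 = 0$ gives the Skorokhod/Lindley representation $S_t = \max_{0 \le j \le t} \sum_{\ell = j+1}^{t} Z_\ell$, so that $S$ is the image of the partial-sum path of $\{Z_\ell\}$ under the one-sided reflection map, and $\tau = \inf\{t : S_t \ge H\}$ is the first passage of this reflected path to level $H$.

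Next I would invoke Assumption~\ref{asum:FCLT}. The centered partial sums of $\{T_\ell\}$ converge in $D[0,1]$ to $\Omega\,\mathcal{W}(\cdot)$, so, after reinstating the residual per-step drift $d_T$, the partial-sum path of $\{Z_\ell\}$ behaves, on the time scale governed by a large threshold $H$, like a Brownian motion with drift $d_T$ and infinitesimal variance $\Omega^2$. Since the one-sided reflection map is Lipschitz in the uniform norm (hence continuous in the Skorokhod topology), the continuous-mapping theorem lets me replace $S_t$ by a reflected Brownian motion $W = \{W_s\}_{s \ge 0}$ with drift $d_T$, variance $\Omega^2$, instantaneous reflection at $0$, and $W_0 = 0$; correspondingly $\tau$ is replaced by $\tau_H := \inf\{s : W_s \ge H\}$, and the phrase ``for large $H$'' absorbs the lower-order error incurred in this replacement.

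To finish, I would compute $\E[\tau_H]$ exactly. Let $g$ solve the ODE $\tfrac{1}{2}\Omega^2 g''(x) + d_T\, g'(x) = -1$ on $(0, H)$ with the Neumann condition $g'(0) = 0$ coming from the reflecting boundary at the origin and the Dirichlet condition $g(H) = 0$. By this choice of $g$, the process $g(W_{s \wedge \tau_H}) + (s \wedge \tau_H)$ is a martingale, so optional stopping (Dynkin's formula) yields $\E[\tau_H] = g(0)$. Solving the linear ODE — a particular solution $-x/d_T$ together with the homogeneous pair $1$ and $e^{-2 d_T x / \Omega^2}$, the two constants being fixed by the boundary conditions — gives $g(0) = \tfrac{\Omega^2}{2 d_T^2}\bigl[\exp(-2 H d_T / \Omega^2) - 1 + 2 H d_T / \Omega^2\bigr]$ when $d_T \ne 0$, while the degenerate case $d_T = 0$ reduces to $\tfrac{1}{2}\Omega^2 g'' = -1$ under the same boundary conditions and gives $g(0) = H^2/\Omega^2$. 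These are exactly the two branches in the statement, with $\ARL = \E[\tau] \approx g(0)$.

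The step I expect to be the main obstacle is the diffusion approximation in the second stage: the FCLT only delivers weak convergence of rescaled paths on a fixed horizon, whereas $\tau$ is itself a large random time, so converting the path-level limit into an approximation of $\E[\tau]$ requires either uniform integrability of the rescaled first-passage times or an almost-sure continuity property of the first-passage functional at Brownian paths, used together with the (exponential, polynomial, or linear in $H$) growth of $\ARL$. This is precisely the analysis carried out by \cite{kim2007distribution}, whose result we quote.
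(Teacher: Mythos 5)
Your proposal is correct and follows essentially the same route as the source the paper relies on: the paper gives no proof of Lemma~\ref{lem:ARL}, quoting it from \cite{kim2007distribution}, and that reference proceeds exactly as you describe — Lindley/reflection representation of the CUSUM, FCLT-based diffusion approximation by a reflected Brownian motion with drift $d_T$ and variance parameter $\Omega^2$, and the first-passage ODE $\tfrac{1}{2}\Omega^2 g'' + d_T g' = -1$ with $g'(0)=0$, $g(H)=0$, whose value $g(0)$ reproduces both branches of the formula. Your identification of the genuine technical gap (upgrading fixed-horizon weak convergence to convergence of the expected first-passage time) is also the right place to defer to the cited analysis.
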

When the monitored process is in-control, we have $d_T = -c\sigma_T$. By plugging $d_T= -c\sigma_T$ in the ARL expression in Lemma~\ref{lem:ARL} and setting it equal to a target $\ARL_0$, one can analytically solve for the value of the control limit $H$. As suggested in \cite{siegmund1985sequential}, a more accurate control limit $H$ can be obtained by solving the following equation:
\begin{equation}
\label{formula:control limit}
\ARL_0 = \frac{\Omega_0^2}{2 c\sigma_T^2}\left\{\exp \left[\frac{2 c\sigma_T(H+1.166 \Omega_0)}{\Omega_0^2}\right]-1-\frac{2 c\sigma_T(H+1.166 \Omega_0)}{\Omega_0^2}\right\},
\end{equation}
where $\Omega^2_0$ is the in-control $\Omega^2$. More details on the calculation of $\Omega^2_0$ are in Appendix~\ref{sec:CvM}. We summarize the calculation of parameters and the control limit $H$, along with other parameters required for the implementation of the DFLIM procedure, in Algorithm \ref{algorithm:calibration on control limit}.

\begin{algorithm}[t!]
\caption{Setup phase for the DFLIM procedure}
\label{algorithm:calibration on control limit}
\textbf{Input:} Sequence of in-control observations $\{X_t,t = 1,2,\ldots,n\}$, in-control mean matrix $M_0$, rank $r$, constant $c$, and target $\ARL_0$.   \\
\textbf{Output:} $\E_0[y_t]$, $\cov_0(y_t)$, $\sigma_T$, and $H$.
 \begin{algorithmic}[1]
    \State Perform SVD on $M_0$ to obtain directions $u_i$ and $v_i$ for projections with $i=1,\ldots,r$.
    \While{$t<n$}
        \State Set $t=t+1$ and obtain $X_t$. 
        \State Compute 
        $\beta_{i,t}$ for $i=1,\ldots,r$ as in \eqref{formula:features coordinates}. 
        \State Obtain $r$ largest singular values of $R_t=X_t-M_0$, i.e.
    $\gamma_{i,t}$ for $i=1,\ldots,r$.
        \State Form $y_t$ as in \eqref{formula:yt}. 
    \EndWhile
    \State Approximate $\E_0[y_t]$ by $\bar y = 1/n\sum_{t=1}^n y_t$ and  $\cov_0(y_t)$ by $1/(n-1)\sum_{t=1}^n(y_t-\bar y)(y_t-\bar y)^\top$. 
    \State Compute $T_t$ as in \eqref{formula:T2-like series} for $t=1,\ldots,n$.
    \State Compute $\bar T = 1/n\sum_{t=1}^n T_t$.
    \State Approximate $\sigma_T$ by $\sqrt{1/(n-1)\sum_{t=1}^n (\bar T-T_t)^2}$.
    \State Estimate variance parameter $\Omega_0^2$ of $T_t$ using Cram\'er-von Mises (CvM) estimator. 
    \State Calculate control limit $H$ by solving equation \eqref{formula:control limit}. 
 \end{algorithmic}
\end{algorithm}

 \subsection{Additional parameter calibration}
 \label{subsec:parameter}

In this subsection, we discuss how to estimate $M_0$ when the target mean image is not well-defined and how to determine $r$ from the estimated $M_0$. Additionally, we provide a recommendation for choosing a constant $c$.\\
 
\noindent{\it Estimation of $M_0$.} With training in-control data $X_1,\ldots,X_{n}$, we use an estimator for $M_0$ being the sample average of in-control observations, namely $\widehat{M}_0 = \sum_{t=1}^{n} X_t/n$. Due to the Law of Large Numbers, $\widehat{M}_0$  approximates the true in-control mean matrix $M_0$ well when $n$ is large. 
\\

\noindent{\it Selection of $r$.} The rank $r$ of $M_0$ is obtained numerically using singular value hard thresholding \citep{gavish2017optimal}. Specifically, for the singular values of $M_0$, denoted as $\lambda_1\ge \ldots \ge \lambda_{\min\{p_1,p_2\}}$, and given a threshold $\lambda>0$, we find $r = \max\{i:\lambda_i\ge \lambda\}$. Alternatively, this threshold can be replaced by $q$ where $r = \min\{i:\sum_{i'=1}^i\lambda_{i'}^2/\sum_{i'=1}^{\min\{p_1,p_2\}}\lambda_{i'}^2\ge q\}$. We use $q$ since it is standardized in the interval $[0,1]$, whereas $\lambda$ is unbounded.

A commonly used rule-of-thumb is to choose $q=0.9$ if the training size is sufficiently large.  To avoid rejecting the null hypothesis ${\rm H}_0$ in \eqref{eq:hypotheses} too frequently (i.e., experiencing high false alarms) when the data is noisy and the training size is small,  a small value of $q$ can be chosen, resulting in a smaller $r$.
Following the definitions in \eqref{formula:features coordinates}, we obtain 
$$
\widehat M_0 = \sum_{i=1}^{\min\{p_1,p_2\}} \hat\lambda_i  \hat u_i  \hat v_i^\top \quad \mbox{ and } \quad \hat \beta_{i,t} =  \hat u_i^\top  X_t  \hat v_i  \quad \mbox{ for } i = 1,\ldots, r.
$$ 
We also obtain $\hat\gamma_{i,t}$ for $i=1,\ldots,\min\{p_1,p_2\}$ as the singular values of $\widehat R_t = X_t- \widehat M_0$.\\

\noindent{\it Selection of $c$.} The constant $c$ in \eqref{formula:dist_free_cusum} is related to the behavior of the ARL, according to the analysis in Section \ref{sec:ARL}. It is preferable to choose a small, non-negative value of $c$ so that $\E_0[d_T]<0$ and $\E_1[d_T]>0$ hold. \cite{kim2007distribution} recommend that practitioners choose $c$ within the range of 0.01 to 0.1 based on empirical studies.

\section{Theoretical Analysis}\label{sec:analysis}

In this section, 
we analyze the shift size in the expected values of the monitoring statistics $T_t$ from the in-control to an out-of-control state. We also discuss the behaviors of $\ARL_0$ and $\ARL_1$ of the DFILM procedure. 
For theoretical analysis, we focus on high-dimensional matrices with large dimensions $p_1, p_2$ and $p_1/p_2$ bounded away from zero. But  the algorithm works for small dimension matrices as well.

\subsection{High-dimensional asymptotic analysis}
For any fixed ${i} = 1, \ldots, r$, it is straightforward to derive the in-control and out-of-control expectations of $\beta_{{i},t}$: 
$$
\E_0[\beta_{{i},t}] =  u_{i}^\top \E_0 [ X_t]  v_{i} = \lambda_{i}, \quad 
\E_1[\beta_{{i},t}] =  u_{i}^\top \E_1 [ X_t]  v_{i} = \lambda_{i} + \alpha_{i}, 
$$
where $\alpha_{i}$ is defined as the static projection of the unknown shift $A$, namely 
\begin{equation}\label{formula:alpha}
    \alpha_{i} = u_{i}^\top  A  v_{i},\quad {i} = 1, \ldots, r.
\end{equation}
\begin{theorem}[Mean shift in $\beta_{{i},t}$]
\label{thm:mean difference of nu}
For any time $t$ and index ${i} = 1,\ldots, r$, 
the difference between the in-control and out-of-control expectations of $\beta_{{i},t}$ is 
\begin{equation*}
    \E_1[\beta_{{i},t}] - \E_0[\beta_{{i},t}] =  \alpha_{i}, 
\end{equation*}
where $\alpha_{i}$ is defined in \eqref{formula:alpha}.  
\end{theorem}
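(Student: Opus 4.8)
The plan is to observe that for each fixed index $i = 1,\ldots,r$ the first-type projection $\beta_{i,t} = u_i^\top X_t v_i$ defined in \eqref{formula:features coordinates} is a \emph{linear} scalar functional of the random matrix $X_t$: the vectors $u_i$ and $v_i$ come from the SVD of the deterministic, known in-control mean $M_0$ (Assumption~\ref{assump:low_rank}) and therefore carry no randomness. Hence the expectation operator commutes with the bilinear form under either probability measure, giving $\E_0[\beta_{i,t}] = u_i^\top \E_0[X_t]\, v_i$ and $\E_1[\beta_{i,t}] = u_i^\top \E_1[X_t]\, v_i$.

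First I would evaluate the in-control expectation. Substituting $\E_0[X_t] = M_0 = \sum_{j=1}^r \lambda_j u_j v_j^\top$ and using orthonormality of the left and right singular systems, $u_i^\top u_j = v_i^\top v_j = \delta_{ij}$, yields $\E_0[\beta_{i,t}] = \sum_{j=1}^r \lambda_j (u_i^\top u_j)(v_j^\top v_i) = \lambda_i$. Next I would repeat the computation under $\mathrm{H}_1$, where $\E_1[X_t] = M_1 = M_0 + A$; by linearity this gives $\E_1[\beta_{i,t}] = u_i^\top M_0 v_i + u_i^\top A v_i = \lambda_i + \alpha_i$, with $\alpha_i$ as in \eqref{formula:alpha}. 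Subtracting the two expressions produces the claimed identity $\E_1[\beta_{i,t}] - \E_0[\beta_{i,t}] = \alpha_i$, and since neither side depends on $t$, it holds for every time $t$.

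There is essentially no hard step; the only points requiring care are bookkeeping ones. I would make explicit that $u_i, v_i$ are non-random so the expectation passes through the bilinear form, and note that the temporal dependence permitted among the $\epsilon_t$'s plays no role here because only the marginal first moments of $X_t$ enter. Assumption~\ref{assump:dist} supplies exactly what is needed: the marginal mean of $X_t$ equals $M_0$ for $t < k$ and $M_0 + A$ for $t \ge k$, so the in-control and out-of-control expectations above are justified.
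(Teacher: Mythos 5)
Your proposal is correct and matches the paper's own derivation, which is given inline just before the theorem statement: both arguments use linearity of expectation through the deterministic bilinear form $u_i^\top(\cdot)v_i$ together with orthonormality of the singular vectors to obtain $\E_0[\beta_{i,t}]=\lambda_i$ and $\E_1[\beta_{i,t}]=\lambda_i+\alpha_i$, then subtract. Your added remarks that temporal dependence is irrelevant here and that Assumption~\ref{assump:dist} supplies the marginal means are accurate but not needed beyond what the paper states.
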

From Theorem \ref{thm:mean difference of nu}, the statistic $\beta_{{i},t}$ can capture the change, provided that the differences $\alpha_{i}$ are non-zero for some $i$. By the definition of $\alpha_{i}$, when $A$ do not lie in the null space of $M_0$. 
In such cases, constructing statistics for detecting a shift in the expectation of $\beta_{{i},t}$ can be effective. On the other hand, if $A$ lies near the null space of $M_0$, then $\alpha_{i}$ oscillates around zero. Due to this consideration, to enhance the robustness of the detection procedure by incorporating another statistic, $\gamma_{{i},t}$. 

To analyze the in-control and out-of-control properties of the statistic $\gamma_{{i},t}$, We need the following two lemmas. 
The following lemma is a direct corollary of Theorem 2 in \cite{bai2008limit}:
\begin{lemma}
\label{lem:in-control gamma}
For any time $t<k$, suppose $R_t\in\mathbb{R}^{p_1\times p_2}$ has independent and identically distributed (i.i.d.) entries with mean zero and variance $\sigma^2$, and their fourth moments are finite. 
As $p_1,p_2\to\infty,p_1/p_2\to\eta\in(0,1)$, we have: 
    \begin{equation*}
        \sigma \left(1-\sqrt{\eta}\right) \leq \lim_{p_1,p_2\to\infty} \frac{\gamma_{{i},t}}{\sqrt{p_2}} \leq \sigma \left(1+\sqrt{\eta}\right),\quad  {i}= 1,\ldots,r, \text{ a.s.},
    \end{equation*}
where $\gamma_{{i},t}$ is the ${i}$th largest singular value of $R_t$.
\end{lemma}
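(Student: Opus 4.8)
The plan is to translate the statement about the singular values of $R_t$ into one about the eigenvalues of a Gram matrix and then invoke two classical random-matrix inputs: the Marchenko--Pastur law and the Bai--Yin theorem on extreme eigenvalues. Concretely, $\gamma_{i,t}^2$ is the $i$-th largest eigenvalue of the $p_1\times p_1$ positive semidefinite matrix $R_t R_t^\top$, so the eigenvalues of the rescaled matrix $W = \frac{1}{p_2} R_t R_t^\top$ are exactly $\gamma_{i,t}^2/p_2$. Under the hypotheses (i.i.d.\ entries, mean zero, variance $\sigma^2$, finite fourth moment, $p_1/p_2\to\eta\in(0,1)$), the empirical spectral distribution of $W$ converges weakly, almost surely, to the Marchenko--Pastur law with ratio $\eta$ and scale $\sigma^2$, supported on $[\sigma^2(1-\sqrt\eta)^2,\ \sigma^2(1+\sqrt\eta)^2]$; this step uses only the i.i.d./second-moment structure.

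First I would establish the upper bound. Since $\gamma_{i,t}\le\gamma_{1,t}$ for every $i$, it suffices to control the top eigenvalue. The Bai--Yin theorem for the largest eigenvalue states that, under the finite fourth-moment hypothesis, $\lambda_{\max}(W)\to\sigma^2(1+\sqrt\eta)^2$ almost surely; taking square roots gives $\limsup_{p_1,p_2\to\infty}\gamma_{i,t}/\sqrt{p_2}\le\sigma(1+\sqrt\eta)$ for each $i=1,\dots,r$. For the lower bound I would use that $r\le p_1$, hence $\gamma_{i,t}\ge\gamma_{r,t}\ge\gamma_{p_1,t}=\sqrt{p_2\,\lambda_{\min}(W)}$, and invoke the companion Bai--Yin result $\lambda_{\min}(W)\to\sigma^2(1-\sqrt\eta)^2$ almost surely --- this is exactly where $\eta<1$ is needed, so that the Marchenko--Pastur support is bounded away from the origin. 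Thus $\liminf_{p_1,p_2\to\infty}\gamma_{i,t}/\sqrt{p_2}\ge\sigma(1-\sqrt\eta)$, and intersecting the two (each almost sure) convergence events yields the claimed two-sided bound. One can in fact sharpen this: since the Marchenko--Pastur density is strictly positive up to its right edge, the number of eigenvalues of $W$ within $\delta$ of $\sigma^2(1+\sqrt\eta)^2$ grows linearly in $p_1$, so for each fixed $i$ the $i$-th largest eigenvalue also converges to the right edge, making the displayed limit exist and equal $\sigma(1+\sqrt\eta)$; the weaker two-sided bound is all the later analysis requires.

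The main obstacle is that essentially all the analytic force is carried by the two cited inputs. Marchenko--Pastur convergence is routine by now (moment method or Stieltjes transform), but proving the Bai--Yin edge statements from scratch --- almost-sure convergence of $\lambda_{\max}(W)$ and $\lambda_{\min}(W)$ to the spectral edges --- is genuinely delicate: it requires either high-moment combinatorial control of $\tr(W^{2m_p})$ with $m_p\to\infty$ at a carefully tuned rate, or quantitative control of the resolvent near the edge, and it is precisely there that the finite fourth-moment assumption enters (the fourth moment being both necessary and sufficient for the $\lambda_{\max}$ result). Since the paper attributes these facts to \cite{bai2008limit}, I would cite them as black boxes and keep my own argument at the level of the reduction to $W$ plus the sandwiching inequalities $\gamma_{p_1,t}\le\gamma_{i,t}\le\gamma_{1,t}$.
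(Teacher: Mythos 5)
Your proposal is correct and follows essentially the same route as the paper, which states this lemma as a direct corollary of the extreme-eigenvalue results in \cite{bai2008limit} without further argument; your reduction of $\gamma_{i,t}$ to the eigenvalues of $W=\frac{1}{p_2}R_tR_t^\top$, the sandwich $\gamma_{p_1,t}\le\gamma_{i,t}\le\gamma_{1,t}$, and the appeal to the Bai--Yin edge limits (with $\eta<1$ keeping the lower edge away from zero) is exactly how that corollary is obtained. Your side remark that the $i$-th largest singular value in fact converges to the right edge $\sigma(1+\sqrt{\eta})$ for each fixed $i$ is also correct and slightly sharper than the two-sided bound the paper records.
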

The lemma provides asymptotic upper and lower bounds for the in-control $\gamma_{{i},t}$. Specifically, the order of $\gamma_{{i},t}$ is $\operatorname{\Theta}\left(\sqrt{p_2}\right)$.   In the out-of-control case, the residual matrix can be decomposed into the shift matrix $A$ and the zero-mean matrix $R_t-A$. Lemma \ref{lem:out-of-control gamma} indicates that the asymptotic behavior of $R_t$ is dominated by the deterministic limiting behavior of $ A$, rather than the random matrix $R_t-A$. The following lemma is a consequence of Theorem 1.1 in \cite{bryc2020singular}:
\begin{lemma} \label{lem:out-of-control gamma}
For $t\geq k$, suppose that $R_t-A$ have i.i.d.\@ entries with zero mean and variance $\sigma^2$, their fourth moments are finite, and the limit 
$
\lim_{p_1, p_2\to\infty}p_1/p_2 = \eta >0
$ 
exists. For ${i} = 1,\ldots, r$, let $\rho_{i}$ denote the ${i}$th largest singular value of $A$. Furthermore, assume $ \lim_{p_1,p_2\to\infty}\rho_{i}/\sqrt{p_1p_2} = \bar{\rho}_{i}$ exists and  is distinct and strictly positive, satisfying $\bar{\rho}_1>\ldots>\bar{\rho}_r>0$. Then $\gamma_{{i},t}$ can be decomposed as 
\begin{equation}\label{formula:decomposition of out-of-control gamma}
    \gamma_{{i},t} = \rho_{i} + z_{i} + m_{i} + \varepsilon_{i}, \quad {i}= 1,\ldots,r
\end{equation}
where $z_{i}$ is a random variable dependent on the dimensions $p_1$ and $p_2$ with zero mean and bounded variance, $\varepsilon_{i}$ is a random variable converging to zero in probability with respect to the dimensions $p_1$ and $p_2$, and the deterministic term $m_{i}$ satisfies 
\begin{equation*}
    m_{i} = \frac{\sigma^2}{2}\left(\frac{\sqrt{\eta}}{\bar{\rho}_{i}^3p_1p_2} - \frac{1}{\sqrt{\eta}\bar{\rho}_{i}}\right). 
\end{equation*}
\end{lemma}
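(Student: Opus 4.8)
The plan is to derive the decomposition from the analysis of outlier singular values of a deterministic matrix perturbed by i.i.d.\ noise in \cite{bryc2020singular}; the work is to verify that the stated hypotheses place us in their asymptotic regime and to translate their conclusion into the form \eqref{formula:decomposition of out-of-control gamma}.

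First I would fix the scales. Write $B_t := R_t - A$, so that $B_t$ is $p_1\times p_2$ with i.i.d.\ zero-mean, variance-$\sigma^2$, finite-fourth-moment entries and $R_t = A + B_t$. By Lemma~\ref{lem:in-control gamma} (equivalently, standard bounds on the operator norm of a rectangular i.i.d.\ matrix) one has $\|B_t\|_{\mathrm{op}} = \Theta(\sqrt{p_2})$, whereas the assumption $\rho_i/\sqrt{p_1p_2}\to\bar\rho_i$ with $\bar\rho_1>\cdots>\bar\rho_r>0$ forces each $\rho_i$ to be of order $\sqrt{p_1p_2}$, hence far above the noise edge, with consecutive $\rho_i$'s separated on that same scale. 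Thus $R_t$ is a supercritically deformed i.i.d.\ matrix, and with probability tending to one each of its $r$ largest singular values $\gamma_{i,t}$ is an isolated outlier attached to the spike $\rho_i$ of $A$.

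Next I would locate the outlier. Passing to the Gram matrix, $\gamma_{i,t}^2$ is the $i$-th largest eigenvalue of $R_t^\top R_t = A^\top A + \Delta_t$ with $\Delta_t = A^\top B_t + B_t^\top A + B_t^\top B_t$; Schur-complementing the eigenvalue equation onto the $\le 2r$-dimensional span of $v_1,\dots,v_r$ and $B_t^\top u_1,\dots,B_t^\top u_r$ reduces it to a determinant identity of bounded size in which $B_t$ enters only through the resolvent $(zI - B_t^\top B_t)^{-1}$ sandwiched against the fixed vectors $u_i$ and $v_i$. Replacing that resolvent by its deterministic (Marchenko--Pastur) equivalent, solving the resulting self-consistent equation, and expanding its solution for $\rho_i$ of order $\sqrt{p_1p_2}$ yields $\gamma_{i,t}^2 = \rho_i^2 + 2\rho_i z_i + (\text{deterministic correction}) + o_P(\cdot)$, where $z_i = u_i^\top B_t v_i$ is the linear-in-$B_t$ term; taking square roots produces \eqref{formula:decomposition of out-of-control gamma}, with the deterministic correction contributing $m_i$ and $\varepsilon_i$ absorbing the $o_P(1)$ remainder. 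That $z_i$ has mean zero and bounded variance is immediate: since $\|u_i\| = \|v_i\| = 1$, $z_i = \sum_{a,b}(u_i)_a (B_t)_{ab}(v_i)_b$ has variance $\sigma^2\sum_{a,b}(u_i)_a^2(v_i)_b^2 = \sigma^2$.

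The main obstacle is pinning down the deterministic correction $m_i$ with the exact constants and sign stated in the lemma. Elementary first- and second-order eigenvalue perturbation theory does not suffice, because the outlier interacts with the entire Marchenko--Pastur bulk of $B_t^\top B_t$ and this interaction must be resummed through the self-consistent equation before one expands in powers of $1/\rho_i$; carrying out that resummation, together with the anisotropic concentration (local-law-type) estimates needed to separate the genuine $O(1)$ fluctuation $z_i$ from the vanishing $\varepsilon_i$ and to exploit the distinctness $\bar\rho_1>\cdots>\bar\rho_r$, is exactly what \cite{bryc2020singular} provides. I would therefore invoke their theorem for this step and only check that, under our normalization (entries of variance $\sigma^2$, $\rho_i/\sqrt{p_1p_2}\to\bar\rho_i$), their expression for the subleading term reduces to the formula for $m_i$ written in the lemma.
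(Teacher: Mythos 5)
The paper offers no proof of this lemma: it is imported verbatim as Theorem~1.1 of \cite{bryc2020singular}, which is exactly what your proposal ultimately invokes for the decisive step (the precise form of $m_i$), so your approach is essentially the same as the paper's, with the Gram-matrix/resolvent/Marchenko--Pastur sketch being a correct but additional account of what that cited theorem does internally. One small notational caution: the $u_i,v_i$ in your expression $z_i = u_i^\top B_t v_i$ must be the singular vectors of the spike $A$ associated with $\rho_i$, not the singular vectors of $M_0$ that the paper denotes by the same symbols.
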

This lemma decomposes the out-of-control $\gamma_{{i},t}$ defined in \eqref{formula:decomposition of out-of-control gamma} into several terms. The random noise $\varepsilon_{i}$ vanishes as the dimensions grow, and $z_{i}$ has zero mean. The deterministic term $m_{i}$ diminishes with the dimensions $p_1$ and $p_2$, converging to a constant of $-{\sigma^2}/{(2\sqrt{\eta}\bar\rho_{i})}$. Given that the singular value of the underlying anomaly $\rho_{i}$ is of the order $\operatorname{\Theta}(\sqrt{p_1p_2})$, it dominates the asymptotic behavior of $\gamma_{{i},t}$.

\begin{theorem}[Asymptotic mean shift in $\gamma_{{i},t}$]
\label{thm:mean difference of gamma}
With the assumptions in Lemmas~\ref{lem:in-control gamma} and \ref{lem:out-of-control gamma}, the limit of the expected difference between in-control and out-of-control $\gamma_{{i},t}$ is expressed as follows:
\begin{equation*}
    \lim_{p_1,p_2\to\infty} \frac{\E_1 [\gamma_{{i},t}]-\E_0 [\gamma_{{i},t}]}{\sqrt{p_1 p_2}} = \bar{\rho}_{i}.
\end{equation*}    
\end{theorem}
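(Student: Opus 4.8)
The plan is to combine the two asymptotic expansions provided by Lemmas~\ref{lem:in-control gamma} and \ref{lem:out-of-control gamma}, take expectations of both, and divide by $\sqrt{p_1 p_2}$. First I would handle the out-of-control term. By Lemma~\ref{lem:out-of-control gamma}, for $t \ge k$ we have the decomposition $\gamma_{i,t} = \rho_i + z_i + m_i + \varepsilon_i$. Taking $\E_1[\cdot]$ of both sides and using that $\rho_i$ is deterministic, that $z_i$ has zero mean, and that $m_i$ is deterministic, gives $\E_1[\gamma_{i,t}] = \rho_i + m_i + \E_1[\varepsilon_i]$. Dividing by $\sqrt{p_1 p_2}$: the first term tends to $\bar\rho_i$ by hypothesis; the term $m_i/\sqrt{p_1 p_2} = \frac{\sigma^2}{2\sqrt{p_1p_2}}\bigl(\frac{\sqrt\eta}{\bar\rho_i^3 p_1 p_2} - \frac{1}{\sqrt\eta\,\bar\rho_i}\bigr) \to 0$ since $m_i$ converges to a finite constant while $\sqrt{p_1p_2}\to\infty$; and $\E_1[\varepsilon_i]/\sqrt{p_1p_2}\to 0$ provided the convergence $\varepsilon_i \to 0$ in probability can be upgraded to convergence of expectations (or at least $o(\sqrt{p_1p_2})$ in $L^1$). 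So $\lim_{p_1,p_2\to\infty} \E_1[\gamma_{i,t}]/\sqrt{p_1p_2} = \bar\rho_i$.

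Next I would handle the in-control term. By Lemma~\ref{lem:in-control gamma}, for $t<k$ the ratio $\gamma_{i,t}/\sqrt{p_2}$ is almost surely asymptotically sandwiched between $\sigma(1-\sqrt\eta)$ and $\sigma(1+\sqrt\eta)$, so $\gamma_{i,t} = \OO(\sqrt{p_2})$ almost surely. Dividing by $\sqrt{p_1 p_2}$ gives $\gamma_{i,t}/\sqrt{p_1 p_2} = \OO(1/\sqrt{p_1}) \to 0$ almost surely. Taking expectations (again assuming uniform integrability so that the a.s.\ bound transfers to $\E_0[\gamma_{i,t}]$), we get $\lim_{p_1,p_2\to\infty} \E_0[\gamma_{i,t}]/\sqrt{p_1 p_2} = 0$. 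Subtracting the two limits yields
\begin{equation*}
\lim_{p_1,p_2\to\infty} \frac{\E_1[\gamma_{i,t}] - \E_0[\gamma_{i,t}]}{\sqrt{p_1 p_2}} = \bar\rho_i - 0 = \bar\rho_i,
\end{equation*}
which is the claim.

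The main obstacle is the interchange of limit and expectation: Lemmas~\ref{lem:in-control gamma} and \ref{lem:out-of-control gamma} give almost-sure bounds and convergence-in-probability statements, not statements about expectations, so strictly speaking one needs a uniform integrability or dominated-convergence argument for each of $\gamma_{i,t}/\sqrt{p_2}$ (in-control) and $\varepsilon_i$, $z_i$ (out-of-control). In the in-control case this follows from the fact that the operator norm of a random matrix with i.i.d.\ mean-zero finite-fourth-moment entries has bounded moments after the $\sqrt{p_2}$ normalization; in the out-of-control case one invokes the bounded-variance property of $z_i$ stated in Lemma~\ref{lem:out-of-control gamma} together with a control on the tail of $\varepsilon_i$. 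I expect the paper either to assume these integrability conditions implicitly as part of the hypotheses of the cited lemmas, or to note that $\gamma_{i,t}$ is bounded by the (normalized) spectral norm, which is well known to be uniformly integrable. Modulo that technical point, the proof is a direct, essentially algebraic, consequence of the two decompositions.
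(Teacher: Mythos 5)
Your proposal follows essentially the same route as the paper's own proof: use Lemma~\ref{lem:in-control gamma} to show the in-control term is $\operatorname{o}(\sqrt{p_1 p_2})$, use the decomposition in Lemma~\ref{lem:out-of-control gamma} to write the out-of-control term as $\sqrt{p_1 p_2}\,\bar{\rho}_{i}$ plus lower-order pieces, take expectations, and divide by $\sqrt{p_1 p_2}$. If anything you are more careful than the paper, which takes expectations of almost-sure and in-probability statements without comment, whereas you correctly flag the uniform-integrability step needed to justify that interchange.
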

The proof of the theorem is available in Appendix~\ref{app:proof}.
Asymptotically, the expectation of the singular values of $R_t$ under an out-of-control state is dominated by the singular values of the shift $A$. Thus, incorporating $\gamma_t$ into statistics enhances the robustness of the detection capability of the DFLIM procedure against the algebraic relationship between $M_0$ and $A$. 

As mentioned in Theorem~\ref{thm:mean difference of nu}, if the left or right singular space of the shift $A$ is orthogonal to $\Span(u_1,\ldots,u_r)$ or $\Span(v_1,\ldots,v_r)$, $\beta_{{i},t}$ may not accurately reflect the shift, and $\gamma_{i,t}$ may help with detection. However, we choose not to rely solely on $\gamma_{i,t}$ for detection because $\beta_t$ demonstrates stronger detection power when the shift $A$ is close in shape to the in-control mean $M_0$. Consider an extreme scenario where, for some non-zero constant $c'$, the equation $A=c'M_0$ holds. For large dimensions $p_1$ and $p_2$, Theorem \ref{thm:mean difference of nu}, Lemma \ref{lem:in-control gamma}, and Lemma \ref{lem:out-of-control gamma} imply
 \begin{align}
     &\left(\E_1[\beta_{{i},t}] - \E_0[\beta_{{i},t}]\right) - \left(\E_1 [\gamma_{{i},t}]-\E_0 [\gamma_{{i},t}]\right) \nonumber\\ 
     &= \alpha_{i} - \left(\rho_{i}+\frac{\sigma^2}{2}\left(\frac{\sqrt{\eta}}{\bar{\rho}_{i}^3p_1p_2} - \frac{1}{\sqrt{\eta}\bar{\rho}_{i}}\right)+\E_1 [\varepsilon_{i}] -\E_0 [\gamma_{{i},t}]\right) \nonumber\\
     &= c'\lambda_{i} - \left(c'\lambda_{i}+\frac{\sigma^2}{2}\left(\frac{\sqrt{\eta}}{\bar{\rho}_{i}^3p_1p_2} - \frac{1}{\sqrt{\eta}\bar{\rho}_{i}}\right)+\E_1 [\varepsilon_{i}] -\E_0 [\gamma_{{i},t}]\right) \nonumber\\
     &=\frac{\sigma^2}{2}\left(\frac{1}{\sqrt{\eta}\bar{\rho}_{i}}-\frac{\sqrt{\eta}}{\bar{\rho}_{i}^3p_1p_2} \right)-\E_1 [\varepsilon_{i}] + \E_0 [\gamma_{{i},t}] \nonumber\\
     &\geq \sigma (1-\sqrt{\eta})\sqrt{p_2}+\operatorname{o}(\sqrt{p_2}) + \frac{\sigma^2}{2}\left(\frac{1}{\sqrt{\eta}\bar{\rho}_{i}}-\frac{\sqrt{\eta}}{\bar{\rho}_{i}^3p_1p_2} \right) - \E_1 [\varepsilon_{i}]. \label{formula:benefit of nu}
 \end{align}
If we impose the assumption that for any dimensions $p_1$ and $p_2$, the random variable $\varepsilon_{i}$ is uniformly integrable, then the convergence of $\varepsilon_{i}$ to zero in probability implies the convergence of the expectation of $\varepsilon_{i}$, namely $\E_1 [\varepsilon_{i}] = \operatorname{o}(1)$. On the right-hand side of inequality \eqref{formula:benefit of nu}, if $\eta > 0$, the dominant term is $\sigma (1-\sqrt{\eta})\sqrt{p_2}$, which grows to infinity with respect to dimensions $p_1$ and $p_2$. 
In a special case where $\eta$ is close to $1$, the term ${\sigma^2}\left({1}/{\sqrt{\eta}\bar{\rho}_{i}}-{\sqrt{\eta}}/{(\bar{\rho}_{i}^3p_1p_2)} \right)/2$ remains positive and still shows the advantage of the feature vector $\beta_{{i},t}$ compared with $\gamma_{i,t}$. Therefore, both statistics in $y_t$ are necessary for effective and robust detection.

\subsection{Non-asymptotic analysis allowing spatial dependence}

We next present an alternative non-asymptotic derivation for the expected increment term of the CUSUM procedure \eqref{formula:T2-like series}, before and after the change. This approach does not rely on lemmas derived from the asymptotic behavior of random matrices; instead, it utilizes second-order information from the detection statistics involving $y_t$.

Although the connection between the spatial dependence in $X_t$ and the structure of $\cov_{\bullet}(y_t),\, \bullet \in \{0,1\}$ is challenging to characterize-- the covariances $\cov_\bullet(\gamma_{t})$ and $\cov_\bullet(\beta_{t}, \gamma_{t})$ remain unknown-- because $\gamma_{t}$ consists of the singular values of a (possibly non-central) random matrix with correlated entries. Instead of delving into the structure of $\cov_\bullet(y_t)$ based on specific spatial dependencies, we express the covariances of $\beta_t$ and $\gamma_t$ in a sufficiently general form to capture all potential structures arising from the spatial dependence in $X_t$. This approach allows the analysis of $T_t$ to accommodate cases where spatial dependence exists in $X_t$.

The in-control and out-of-control covariance matrices of $y_t$ as follows:
\begin{equation*}
    \cov_0(y_t) = \Sigma = \begin{bmatrix}
        \Sigma_{\beta} & P \\
        P^\top & \Sigma_{\gamma} \\
    \end{bmatrix}, \quad 
    \cov_1(y_t) = \tilSig = \begin{bmatrix}
        \Sigma_{\beta} & \tilP \\
        \tilP^\top & \tilSig_{\gamma} \\
    \end{bmatrix}, 
\end{equation*}
where 
$$
\Sigma_\beta = \cov_0\cirbrack{\beta_t},\  P = \cov_0\cirbrack{\beta_t,\gamma_t},\  \tilP = \cov_1\cirbrack{\beta_t,\gamma_t},\  \Sigma_\gamma = \cov_0\cirbrack{\gamma_t},\ \mbox{ and } \tilSig_\gamma = \cov_1\cirbrack{\gamma_t}.
$$
Clearly, $\Sigma_\beta = \cov_1\cirbrack{\beta_t}$ when only a mean shift is assumed, as in a typical mean-shift detection problem. Corresponding to the block structures that differentiate $\beta$ and $\gamma$, we also define the expected shift size in them as $\E_1\sqbrack{y_t} - \E_0\sqbrack{y_t}=\delta = [\delta_\beta^\top,\delta_\gamma^\top]^\top$.
\begin{theorem}\label{lem:beyond iid gain in T square}
The difference between the in-control and out-of-control expectations of the $T_t$ statistics is expressed as follows: 
    \begin{equation*}
    \begin{aligned}
        \E_1\sqbrack{T_t} - \E_0\sqbrack{T_t} =\Delta_1+\Delta_2+\Delta_3,
    \end{aligned}
    \end{equation*}
    where 
    \begin{align*}
        \Delta_1 &= \tr\left[\left(\Sigma/\Sigma_\beta\right)^{-1}
       \left(\tilSig/\Sigma_{\beta}\right)\right]-r,\\[2pt]
       \Delta_2 &= \tr\left\{\left(\Sigma/\Sigma_\beta\right)^{-1}\left[
        \Sigma_{\beta}^{-1/2}\left(P-\tilP\right)\right]^\top\left[
        \Sigma_{\beta}^{-1/2}\left(P-\tilP\right)\right]\right\},\\[2pt]
        \Delta_3 &= \left\|\Sigma_\beta^{-1/2}\delta_\beta\right\|^2 + \left\|\left(\Sigma/\Sigma_\beta\right)^{-1/2}\left(P^\top\Sigma_{\beta}^{-1}\delta_\beta+\delta_\gamma\right)\right\|^2, \\[2pt]
        \Sigma/&\Sigma_\beta := \Sigma_\gamma-P^\top \Sigma_\beta^{-1} P, \quad \mbox{ and } \quad \tilSig/\Sigma_\beta := \tilSig_\gamma-\tilP^\top \Sigma_\beta^{-1} \tilP.
    \end{align*}
\end{theorem}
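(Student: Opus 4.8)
The plan is to expand the quadratic form $T_t = (y_t - \E_0[y_t])^\top \Sigma^{-1} (y_t - \E_0[y_t])$ under both the in-control and out-of-control measures, take expectations, and subtract. The key algebraic tool is the block inversion of $\Sigma$ via its Schur complement $\Sigma/\Sigma_\beta = \Sigma_\gamma - P^\top \Sigma_\beta^{-1} P$: writing $\Sigma^{-1}$ in the standard block form built from $\Sigma_\beta^{-1}$ and $(\Sigma/\Sigma_\beta)^{-1}$ will let me separate the $\beta$-block contribution (which is clean because $\cov_1(\beta_t) = \cov_0(\beta_t) = \Sigma_\beta$) from the $\gamma$-block and cross-block contributions (which are \emph{not} invariant under the change and hence generate $\Delta_1$ and $\Delta_2$).

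First I would record the elementary fact that for a random vector $z$ with mean $\mu$ and covariance $V$, and a fixed matrix $Q$, $\E[(z - \mu_0)^\top Q (z - \mu_0)] = \tr(Q V) + (\mu - \mu_0)^\top Q (\mu - \mu_0)$, where $\mu_0$ is the centering constant (here $\E_0[y_t]$). Applying this under $\E_1$ gives $\E_1[T_t] = \tr(\Sigma^{-1}\tilSig) + \delta^\top \Sigma^{-1}\delta$, and under $\E_0$ it gives $\E_0[T_t] = \tr(\Sigma^{-1}\Sigma) = 2r$. So the difference is $\tr(\Sigma^{-1}(\tilSig - \Sigma)) + \delta^\top\Sigma^{-1}\delta$, and the whole theorem reduces to re-expressing these two terms in the stated block form. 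The mean-shift quadratic form $\delta^\top \Sigma^{-1}\delta$ is exactly the classical ``completing the square'' identity for a partitioned inverse, which yields $\Delta_3 = \|\Sigma_\beta^{-1/2}\delta_\beta\|^2 + \|(\Sigma/\Sigma_\beta)^{-1/2}(P^\top\Sigma_\beta^{-1}\delta_\beta + \delta_\gamma)\|^2$ directly; I would cite the standard block-matrix quadratic-form identity and verify the two summands match.

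The remaining and slightly more delicate piece is the trace term $\tr(\Sigma^{-1}(\tilSig - \Sigma))$. Here I would substitute the block forms of $\Sigma^{-1}$ and of $\tilSig$, and carefully track which blocks of $\tilSig - \Sigma$ survive: the top-left block vanishes (since $\Sigma_\beta$ is unchanged), so only the off-diagonal blocks $\tilP - P$ and the bottom-right block $\tilSig_\gamma - \Sigma_\gamma$ contribute. Using the identity $\Sigma^{-1}_{\beta\text{-block}} = \Sigma_\beta^{-1} + \Sigma_\beta^{-1}P(\Sigma/\Sigma_\beta)^{-1}P^\top\Sigma_\beta^{-1}$ for the top-left block of $\Sigma^{-1}$, and the cross/bottom blocks $-\Sigma_\beta^{-1}P(\Sigma/\Sigma_\beta)^{-1}$ and $(\Sigma/\Sigma_\beta)^{-1}$ respectively, I would collect terms. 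The cross-block terms will reorganize into $\Delta_2 = \tr\{(\Sigma/\Sigma_\beta)^{-1}[\Sigma_\beta^{-1/2}(P-\tilP)]^\top[\Sigma_\beta^{-1/2}(P-\tilP)]\}$ after adding and subtracting $\tilP^\top\Sigma_\beta^{-1}\tilP$ to reconstruct $\tilSig/\Sigma_\beta$, and the bottom-block terms will give $\tr[(\Sigma/\Sigma_\beta)^{-1}(\tilSig/\Sigma_\beta)] - r = \Delta_1$.

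The main obstacle I anticipate is purely bookkeeping: the bottom-right contribution naturally produces $\tr[(\Sigma/\Sigma_\beta)^{-1}(\tilSig_\gamma - \Sigma_\gamma)]$, and I must show this combines with the leftover cross-terms to become $\tr[(\Sigma/\Sigma_\beta)^{-1}(\tilSig/\Sigma_\beta)] - r$ plus exactly $\Delta_2$ — i.e. the quantity $\tilSig_\gamma - \Sigma_\gamma$ must be rewritten as $(\tilSig/\Sigma_\beta - \Sigma/\Sigma_\beta) + (\tilP^\top\Sigma_\beta^{-1}\tilP - P^\top\Sigma_\beta^{-1}P)$, and the second piece must cancel/combine with the genuine cross-block terms $-2\tr[(\Sigma/\Sigma_\beta)^{-1}\Sigma_\beta^{-1}P(\ldots)]$-type expressions to close up into the single squared form $(P-\tilP)^\top\Sigma_\beta^{-1}(P-\tilP)$. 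Getting the cross-term coefficient of $2$ to land correctly (it comes from the symmetry of the two off-diagonal blocks of $\tilSig - \Sigma$) is the step most prone to sign or factor-of-two error, so I would verify it on a small explicit $2\times 2$-block example before writing the general argument.
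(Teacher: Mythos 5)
Your proposal is correct and follows essentially the same route as the paper's proof: expand the quadratic form via $\E[(z-\mu_0)^\top Q(z-\mu_0)]=\tr(QV)+(\mu-\mu_0)^\top Q(\mu-\mu_0)$ to get $\E_1[T_t]-\E_0[T_t]=\tr(\Sigma^{-1}\tilSig)+\delta^\top\Sigma^{-1}\delta-2r$, then apply the Schur-complement block inverse of $\Sigma$, reconstruct $\tilSig/\Sigma_\beta$ by adding and subtracting $\tilP^\top\Sigma_\beta^{-1}\tilP$ to isolate $\Delta_1$ and $\Delta_2$, and complete the square in the partitioned quadratic form to obtain $\Delta_3$. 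The bookkeeping steps you flag (the symmetric off-diagonal blocks producing the single squared term $(P-\tilP)^\top\Sigma_\beta^{-1}(P-\tilP)$) close up exactly as you anticipate.
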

The proof of the theorem is provided in Appendix~\ref{app:proof}. According to Theorem~\ref{lem:beyond iid gain in T square}, $\Delta_1$ is small when $\tilSig$ deviates slightly from $\Sigma$. In contrast, $\Delta_2$ is positive since it is the trace of the multiplication of two positive definite matrices. The last term, $\Delta_3$, increases quadratically with the shift sizes in $\beta_t$ and $\gamma_t$. Assuming non-zero shift sizes in $\beta_t$ or $\gamma_t$, $\Delta_3$ is also positive as validated by Theorems~\ref{thm:mean difference of nu} and~\ref{thm:mean difference of gamma} under the assumption of independence among matrix entries. Overall, Theorem~\ref{lem:beyond iid gain in T square} demonstrates the approximate positivity of the mean shift in $T_t$, enabling DFLIM to raise an alarm after the change point quickly.

\subsection{\texorpdfstring{$\ARL_0$}{ARL0} and \texorpdfstring{$\ARL_1$}{ARL1}}\label{sec:ARL}
In this section, we provide expressions of $\ARL_0$ and $\ARL_1$. Recall that $\ARL_0$ and $\ARL_1$ represent $\E_0[\tau]$ and $\E_1[\tau]$, respectively. According to Lemma~\ref{lem:ARL}, for the in-control case, we  have:
\begin{equation*}
\ARL_0 = \E_0\left[\tau\right] \approx 
\frac{\Omega_0^2}{2(c\sigma_T)^2}\left[\exp \left(\frac{2c\sigma_TH}{\Omega_0^2}\right)-1-\frac{2c\sigma_TH}{\Omega_0^2}\right],
\end{equation*}
where $\ARL_0$ increases exponentially with respect to the control limit $H$. For an out-of-control case, we obtain:
\begin{equation*}
\ARL_1 = \E_1\left[\tau\right] \approx 
\frac{\Omega_1^2}{2 d_1^2}\left[\exp \left(-\frac{2Hd_1}{\Omega_1^2}\right)-1+\frac{2d_1 H}{\Omega_1^2}\right],
\end{equation*}
where $\Omega_1^2$ represents $\Omega^2$ under the out-of-control probability measure and $d_1 = \E_1[T_t] - \E_0[T_t] -c\sigma_T$. 
Given the approximate positivity of $d_1$ as validated in Theorem~\ref{lem:beyond iid gain in T square}, $\ARL_1$ increases on the order of $\OO(H)$ (i.e., increases linearly with respect to $H$). Consequently, for a large $H$, we have $\ARL_1 \ll \ARL_0$. 

To study the behavior of $\ARL_1$ when the assumption of unchanged $\Omega^2$ is violated, we consider the case where the process variability shifts from $\Omega_0^2$ to $\Omega_1^2$. Performing a Taylor expansion on $\E_1[\tau]$, we obtain
\[
\begin{aligned}
  \E_1[\tau] &\approx \frac{\Omega_1^2}{2 d_1^2}\left[1-\frac{2 d_1 H}{\Omega_1^2}+\left(\frac{2 d_1 H}{\Omega_1^2}\right)^2+{o}\left(\frac{1}{\Omega_1^4}\right)-1+\frac{2 d_1 H}{\Omega_1^2}\right]\\
    &= \frac{2H^2}{\Omega_1^2}+   o\left(\frac{1}{\Omega_1^2}\right).
\end{aligned}
\]
This approximation illustrates that if $\Omega_1^2 > \Omega_0^2$, the detection of a shift becomes even faster compared to the case of unchanged $\Omega^2$. On the other hand, $\Omega_1^2$ is smaller than $\Omega_0^2$, detection becomes slower. However, even with a decrease in $\Omega_1^2$, the reduction is typically not substantial compared to the size of $\Omega_0^2$, which tends to be large in practice. Therefore, the slowdown in detection is usually not significant. 

\section{Simulated Experiments}\label{sec:simulation}

In this section, we perform numerical experiments to compare the DFLIM procedure with existing procedures. Specifically, we consider the following three baselines: (1) MEWMA \citep{otto2021parallelized}, which applies parallelized multivariate EWMA control charts to matrix data for efficient detection of local changes; (2) MGLR \citep{okhrin2021new}, which constructs monitoring statistics based on ROI and corresponding likelihood ratios; (3) ST-SSD \citep{yan2018real}, which decomposes the data into informative characteristics and noises. 

\subsection{Settings}
\label{subsec:settings}

First, we generate in-control matrices $\{X_t\in\mathbb R^{p_1\times p_2} : t = 1,2,\ldots\}$ using the following equation:
\begin{equation}\label{formula:generation}
X_t = M_0 + \sum_{j = 0}^{\ell}\phi^j \varepsilon_{t-j},
\end{equation}
where $M_0$ is the low-rank in-control mean matrix, $\varepsilon_t$ denote noise matrices, and $\phi>0$ is the auto-correlation parameter. Equation \eqref{formula:generation} represents a moving average model of order $\ell$. In our simulated experiments, we set $p_1=100$ and $p_2=200$. The in-control data are generated as follows:
\begin{itemize}
    \item We vary the ranks of $M_0$ within the set $\{2,5\}$. The rank-two $M_0$ is constructed as a chessboard with elements $M_0[j_1,j_2]$ for $j_1=1,2,\ldots, 100$ and $j_2=1,2,\ldots,200$:
$$
M_0[j_1,j_2] =
\begin{cases}
0.1, &10k_1+1\leq j_1 \leq 10k_1+5,\ 40k_2 + 11 \leq  j_2 \leq 40k_2 + 20, \\
0.1, &10k_1+6\leq j_1 \leq 10k_1+10,\ 40k_2 + 21 \leq  j_2 \leq 40k_2 + 30, \\
-0.1, &10k_1+1\leq j_1 \leq 10k_1+5,\ 40k_2 + 31 \leq j_2 \leq 40k_2 + 40, \\
-0.1, &10k_1+6\leq j_1 \leq 10k_1+10,\ 40k_2 + 1 \leq j_2 \leq 40k_2 + 10, \\
0, &\text{ otherwise,}
\end{cases}
$$
where $0\leq k_1 \leq 4$ and $0\leq k_2\leq 9$. To extend the rank to five, we superimpose the rank-three approximation of the truncated first image of solar flare image data \citep{xie2012change} onto the chessboard signal.  Figure \ref{fig:background} displays the resulting images of the rank-two and rank-five $M_0$.

\begin{figure}[b!]
\centering
\subfigure[$\Rank(M_0)=2$]{%
\resizebox*{5cm}{!}{\includegraphics{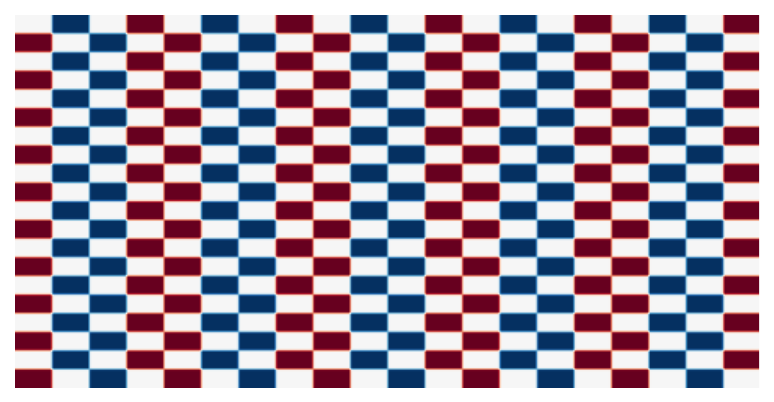}}}\hspace{25pt}
\subfigure[$\Rank(M_0)=5$]{%
\resizebox*{5cm}{!}{\includegraphics{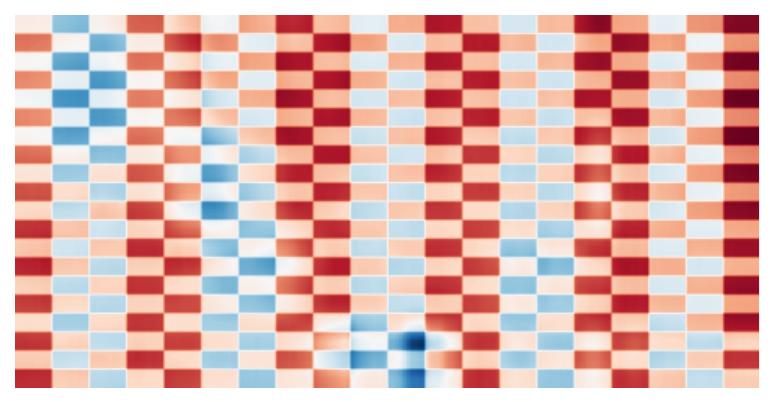}}}
\caption{In-control mean matrix $M_0$ with rank $r \in \{2,5\}$.} 
\label{fig:background}
\end{figure}
\item For the distribution of $\varepsilon_t$, we select one from the following two distributions:
\begin{enumerate}[(1)]
    \item \textit{Normal distribution:} We generate $\varepsilon_t$ using the matrix normal distribution \citep[Chapter~2]{gupta2018matrix}. Thus,
    \begin{equation}\label{formula:noise matrix generation 1}
        \varepsilon_t \overset{\mathrm{i.i.d.}}{\sim} \mathcal{MN} \left(0,\Sigma_{\rm row}, \Sigma_{\rm col}\right),   
    \end{equation}
    where $\Sigma_{\rm row}\in\mathbb R^{p_1\times p_1}$ and $\Sigma_{\rm col}\in\mathbb R^{p_2\times p_2}$ are specified to capture the spatial correlations.  
    \item \textit{Non-normal distribution:} We first generate $\tilde \varepsilon_t$ from \eqref{formula:noise matrix generation 1} and then transform each entry $\tilde \varepsilon_t[j_1,j_2]$ into $\varepsilon_t[j_1,j_2]$ as follows: 
    \begin{equation*}
        \varepsilon_t[j_1,j_2] = -\log(1 - \Psi(\tilde\varepsilon_t[j_1,j_2])),
    \end{equation*}
    where $\Psi(\cdot)$ represents the cumulative distribution of the standard normal random variable. Thus, $\varepsilon_t$ has entries $\varepsilon_t[j_1,j_2]$ that are exponentially distributed with an expectation of $1$ and exhibit correlations among them. 
\end{enumerate}
\item To incorporate spatial correlation, both covariance matrices $\Sigma_{\rm row}$ and $\Sigma_{\rm col}$ are specified using either tri-diagonal covariance or exponential covariance, defined as follows:
\begin{equation}\label{formula:spatial correlation}
\begin{aligned}
\text{Tri-diagonal } &= \begin{bmatrix}
1 & 0.3 & \cdots & 0 & 0 \\
0.3 & 1 & \cdots & 0 & 0 \\
\vdots & \vdots & \ddots   & \vdots & \vdots \\
0 & 0 & \cdots & 1 & 0.3 \\
0 & 0 & \cdots & 0.3 & 1
\end{bmatrix},\\
\text{Exponential } &= 
\begin{bmatrix}
1 & 0.3 &  \cdots & 0.3^{p-2} & 0.3^{p-1} \\
0.3 & 1 &  \cdots & 0.3^{p-3} & 0.3^{p-2} \\
\vdots  & \vdots & \ddots & \vdots & \vdots \\
0.3^{p-2} & 0.3^{p-3} & \cdots  & 1 & 0.3 \\
0.3^{p-1} & 0.3^{p-2} & \cdots  & 0.3 & 1
\end{bmatrix},
\end{aligned}
\end{equation}
where the dimensions of the covariance matrices vary in $\{p_1, p_2\}$, corresponding to the row and column sides. 
\item To incorporate auto-correlation, we vary the lag order $\ell \in \{5, 20\}$ 
for the moving average term $\sum_{j = 0}^{\ell}\phi^j \varepsilon_{t-j}$ in \eqref{formula:generation}, with the parameter $\phi$ fixed as $\phi = 0.5$. 
\end{itemize}

We generate the out-of-control matrices according to the following equation:
\begin{equation*}\label{formula:OOC generation}
X_t = M_0 + A + \sum_{j = 0}^{\ell}\phi^j \varepsilon_{t-j},
\end{equation*}
where the mean $M_0$ and the noise $\sum_{i = 0}^{\ell}\phi^i \varepsilon_{t-i}$ follow the same settings as in the in-control matrices. 
We test the following four mean shifts for $A$:
\begin{enumerate}[(i)]
    \item Sparse: 
     $$ A[j_1,j_2] = 
    \begin{cases}
    3, &\text{ if } 8\leq j_1\leq 13 \text{ and } 18\leq j_2\leq 23, \\
    0, &\text{ otherwise.}
    \end{cases}
    $$
    \item Ring: 
    $$
    A[j_1,j_2] =
    \begin{cases}
    0.173,& \text{ if } \floor{\sqrt{(j_1-50)^2+(j_2-100)^2}} = 12k_1 + k_2, \\
    -0.173,& \text{ if } \floor{\sqrt{(j_1-50)^2+(j_2-100)^2}} = 12k_1 + 8 + k_2, \\
    0, &\text{ otherwise,}
    \end{cases}
    $$
    where $k_1$ is some non-negative integer and $0 \leq k_2 \leq 3$.  
    \item Sine: 
    $$
    A[j_1,j_2] = 
    0.283\sin{\frac{j_2\pi}{5}}\sin{\frac{2j_1\pi}{5}}.
    $$
    \item Chessboard: identical to the rank-two $M_0$. 
\end{enumerate}
The visual representation of the four shift patterns is provided in Figure \ref{fig:shifts}.
\begin{figure}[t]
\centering
\subfigure[Sparse]{\label{fig:sparse}
\resizebox*{0.22\linewidth}{!}{\includegraphics{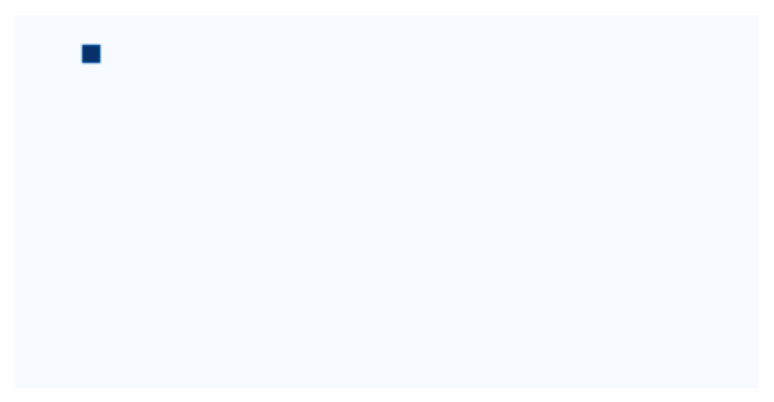}}}\hspace{0pt}
\subfigure[Ring]{\label{fig:ring}
\resizebox*{0.22\linewidth}{!}{\includegraphics{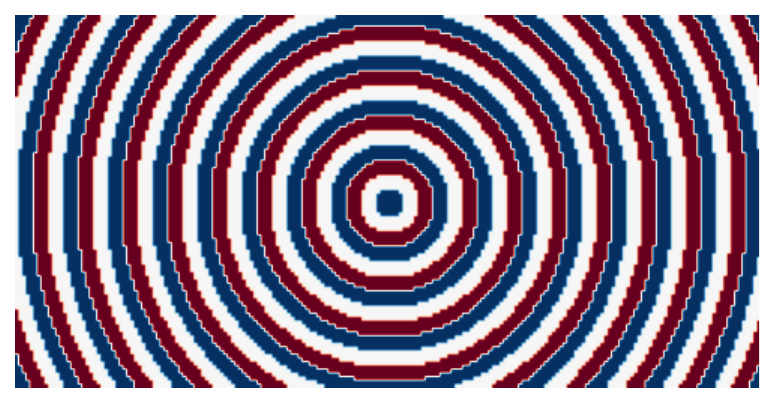}}}\hspace{0pt}
\subfigure[Sine]{\label{fig:sine}
\resizebox*{0.22\linewidth}{!}{\includegraphics{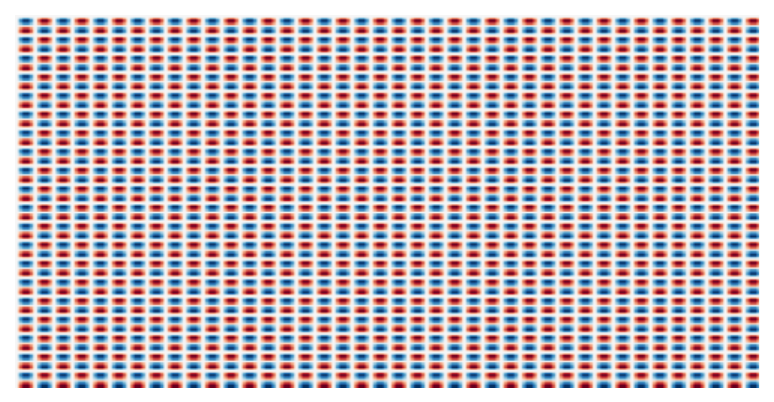}}}\hspace{0pt}
\subfigure[Chessboard]{\label{fig:chessboard}
\resizebox*{0.22\linewidth}{!}{\includegraphics{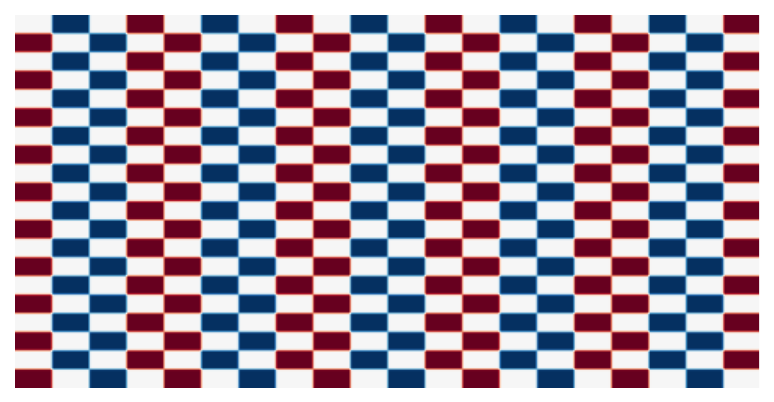}}}\hspace{0pt}
\caption{Four patterns of shift matrix $A$.}
\label{fig:shifts}
\end{figure}

\subsection{In-control performance}
In each in-control setting, we generate 1000 sequences, each consisting of $800$ matrix observations. The target $\ARL_0$ is set to $200$. Baselines MEWMA, MGLR, and ST-SSD tune their control limits through trial-and-error using these 1000 sequences. For DFLIM, we set $c=0.01$ for all experiments and determine the control limit by solving equation \eqref{formula:control limit} for $H$. Each procedure is then applied to these 1000 sequences using its determined control limit to estimate $\ARL_0$.

In each in-control setting, we generate 1000 sequences, each consisting of 800 matrix observations. The target $\ARL_0$ is set to 200. Baseline methods MEWMA, MGLR, and ST-SSD tune their control limits through trial-and-error using these 1000 sequences. For DFLIM, we determine the control limit by solving equation \eqref{formula:control limit} for $H$. Subsequently, each procedure is applied to the same dataset to find the control limits.

Table~\ref{tab:arl0} summarizes the control limits ($H$) and estimated $\ARL_0$ values for different procedures. 
 MEWMA, MGLR, and ST-SSD achieve $\ARL_0$ values close to the target 200, as expected, because their control limits are determined through trial-and-error calibration using the same data sequences. The table also demonstrates that the analytically determined control limits for DFLIM provide relatively accurate $\ARL_0$ values close to the target. The simulated data has complicated spatial correlations as defined in \eqref{formula:spatial correlation}, and the target $\ARL_0$ is set to a small value, potentially increasing the risk that the time horizon might not be long enough for the FCLT to be applicable. Despite these slight theoretical violations, the empirical $\ARL_0$ values obtained for DFLIM are still quite accurate. These results suggest that \eqref{formula:control limit} is widely applicable in practice.

\begin{table}[t]
\caption{Control limit ($H$) and estimated $\ARL_0$ for various settings of simulated processes with target $\ARL_0=200$ (standard errors in parentheses). }
\label{tab:arl0}
\resizebox{\textwidth}{!}{
\begin{tabular}{cccc cccc cccc} 
\toprule
  &  &  &  & \multicolumn{2}{c}{MEWMA} & \multicolumn{2}{c}{MGLR} & \multicolumn{2}{c}{ST-SSD}  & \multicolumn{2}{c}{DFLIM} \\
  \cmidrule(lr){1-4}   \cmidrule(lr){5-6}   \cmidrule(lr){7-8}   \cmidrule(lr){9-10}   \cmidrule(lr){11-12}
Distribution & Rank & Lag & Covariance  & {$H$} & {$\ARL_0$} & {$H$} & {$\ARL_0$} & {$H$} & {$\ARL_0$} & {$H$} & {$\ARL_0$}\\ 
  \cmidrule(lr){1-4}   \cmidrule(lr){5-6}   \cmidrule(lr){7-8}   \cmidrule(lr){9-10}   \cmidrule(lr){11-12}
Normal & 2 & 5  & Tri-diagonal & 180.202 & 198.62 (6.248) & 6.062 & 201.04 (5.495) & 2.800 & 201.05 (5.680) & 36.507 & 201.48 (5.321) \\
Normal & 2 & 5  & Exponential & 180.219 & 200.41 (5.910) & 6.043 & 200.72 (5.528) & 2.799 & 199.42 (5.692) & 36.654 & 197.26 (5.119) \\
Normal & 2 & 20 & Tri-diagonal & 180.455 & 200.20 (5.866) & 6.155 & 198.46 (5.465) & 2.796 & 199.87 (6.036) & 36.776 & 203.02 (5.188) \\
Normal & 2 & 20 & Exponential & 180.415 & 198.12 (5.890) & 6.236 & 200.17 (5.574) & 2.804 & 200.17 (5.698) & 36.935 & 200.81 (5.103) \\
Normal & 5 & 5  & Tri-diagonal & 184.765 & 198.72 (8.941) & 6.039 & 200.94 (5.694) & 2.779 & 198.83 (5.576) & 36.551 & 201.30  (5.297) \\
Normal & 5 & 5  & Exponential & 183.968 & 201.29 (9.059) & 6.015 & 198.83 (5.469) & 2.796 & 201.10 (5.715) & 36.688 & 201.28 (5.173) \\
Normal & 5 & 20 & Tri-diagonal & 184.236 & 201.88 (9.081) & 6.221 & 199.75 (5.640) & 2.808 & 200.98 (5.687) & 36.632 & 205.51 (5.354) \\
Normal & 5 & 20 & Exponential & 184.509 & 198.50 (9.053) & 6.203 & 200.92 (5.574) & 2.783 & 198.36 (5.827) & 36.641 & 197.76 (5.167) \\
Non-normal & 2  & 5 & Tri-diagonal & 294.912 & 199.32 (10.903) & 6.042 & 201.49 (5.562) & 2.837 & 198.39 (5.935) & 37.208 & 202.81 (5.167) \\
Non-normal & 2  & 5 & Exponential & 296.649 & 201.74 (10.953) & 6.013 & 200.26 (5.526) & 2.814 & 200.38 (5.988) & 37.416 & 200.32 (5.292) \\
Non-normal & 2  & 20 & Tri-diagonal & 297.494 & 199.15 (10.911) & 6.191 & 201.17 (5.839) & 2.828 & 198.72 (5.883) & 37.359 & 204.95 (5.187) \\
Non-normal & 2  & 20 & Exponential & 299.497 & 200.49 (10.930) & 6.173 & 198.74 (5.435) & 2.834 & 200.51 (5.846) & 37.510 & 202.34 (5.425) \\
Non-normal & 5  & 5 & Tri-diagonal & 389.186 & 199.95 (10.926) & 6.017 & 199.69 (5.446) & 2.821 & 199.74 (5.596) & 37.369 & 205.32 (5.383) \\
Non-normal & 5  & 5 & Exponential & 391.227 & 199.95 (10.926) & 5.988 & 200.12 (5.441) & 2.832 & 199.70 (5.634) & 37.283 & 206.09 (5.357) \\
Non-normal & 5  & 20 & Tri-diagonal & 394.891 & 199.15 (10.911) & 6.212 & 200.55 (5.578) & 2.818 & 200.67 (5.817) & 37.483 & 207.76 (5.351) \\
Non-normal & 5  & 20 & Exponential & 394.642 & 199.95 (10.926) & 6.215 & 201.63 (5.533) & 2.834 & 200.08 (5.867) & 37.565 & 206.17 (5.253) \\
\bottomrule
\end{tabular}}
\end{table}

\subsection{Out-of-control performance}
In out-of-control experiments, we consider shift matrix $A$ with sparse, ring, sine, and chessboard patterns as described in Section~\ref{subsec:settings}, which are added to the in-control data. The experiment is repeated 1000 times for each procedure. The results of the out-of-control performances are summarized in Tables \ref{tab:sparse arl1} --\ref{tab:chessboard arl1} for each shift pattern $A$. 

\begin{table}[tbp]
\caption{Sparse shift: $\ARL_1$ for various settings of simulated processes with target $\ARL_0=200$ (standard errors in parentheses). }
\label{tab:sparse arl1}
\centering
\resizebox{\textwidth}{!}{
\begin{tabular}{cccc  cccc} \toprule
{Distribution} & {Rank} & {Lag} & {Covariance} 
 & \multicolumn{1}{c}{MEWMA} & \multicolumn{1}{c}{MGLR} & \multicolumn{1}{c}{ST-SSD}  & \multicolumn{1}{c}{DFLIM}\\ 
  \cmidrule(lr){1-4}  \cmidrule(lr){5-8}   
Normal & 2 & 5 & Tri-diagonal & 198.93 (6.227) & 207.34 (5.756) & 79.17 (2.347) & 15.06 (0.232) \\
Normal & 2 & 5 & Exponential & 197.69 (6.249) & 200.58 (5.425) & 85.01 (2.603) & 16.66 (0.289) \\
Normal & 2 & 20 & Tri-diagonal & 196.08 (6.190) & 188.46 (5.211) & 80.64 (2.597) & 15.61 (0.264) \\
Normal & 2 & 20 & Exponential & 201.89 (6.215) & 200.11 (5.618) & 84.05 (2.600) & 16.49 (0.278) \\
Normal & 5 & 5 & Tri-diagonal & 180.50 (9.163) & 202.62 (5.683) & 74.58 (2.334) & 14.77 (0.233) \\
Normal & 5 & 5 & Exponential & 155.01 (8.373) & 197.18 (5.409) & 84.31 (2.661) & 16.09 (0.271) \\
Normal & 5 & 20 & Tri-diagonal & 153.96 (8.399) & 198.62 (5.486) & 83.17 (2.589) & 15.18 (0.258) \\
Normal & 5 & 20 & Exponential & 175.24 (8.724) & 202.35 (5.428) & 77.40 (2.417) & 16.39 (0.275) \\
Non-normal & 2 & 5 & Tri-diagonal & 156.04 (9.991) & 197.86 (5.533) & 83.06 (2.675) & 20.52 (0.353) \\
Non-normal & 2 & 5 & Exponential & 178.38 (10.501) & 194.10 (5.516) & 84.74 (2.624) & 21.11 (0.387) \\
Non-normal & 2 & 20 & Tri-diagonal & 163.70 (10.163) & 191.24 (5.293) & 82.38 (2.530) & 21.03 (0.372) \\
Non-normal & 2 & 20 & Exponential & 174.38 (10.415) & 189.54 (5.334) & 85.97 (2.708) & 21.20 (0.388) \\
Non-normal & 5 & 5 & Tri-diagonal & 164.00 (10.182) & 202.14 (5.522) & 85.50 (2.744) & 21.03 (0.372) \\
Non-normal & 5 & 5 & Exponential & 176.78 (10.467) & 182.21 (5.268) & 87.27 (2.784) & 21.78 (0.396) \\
Non-normal & 5 & 20 & Tri-diagonal & 182.37 (10.584) & 195.56 (5.465) & 80.65 (2.769) & 20.80 (0.379) \\
Non-normal & 5 & 20 & Exponential & 156.80 (10.011) & 196.41 (5.581) & 90.09 (2.824) & 22.01 (0.397) \\
\bottomrule
\end{tabular}}
\end{table}
\begin{table}[tbp]
\caption{Ring shift: $\ARL_1$ for various settings of simulated processes with target $\ARL_0=200$ (standard errors in parentheses). }
\label{tab:ring arl1}
\centering
\resizebox{\textwidth}{!}{
\begin{tabular}{cccc  cccc} \toprule
{Distribution} & {Rank} & {Lag} & {Covariance} 
 & \multicolumn{1}{c}{MEWMA} & \multicolumn{1}{c}{MGLR} & \multicolumn{1}{c}{ST-SSD}  & \multicolumn{1}{c}{DFLIM}\\ 
  \cmidrule(lr){1-4}  \cmidrule(lr){5-8}   
Normal & 2 & 5 & Tri-diagonal & 196.30 (6.062) & 194.71 (5.712) & 21.50 (0.687) & 28.69 (0.498) \\
Normal & 2 & 5 & Exponential & 197.92 (6.041) & 209.54 (5.775) & 23.73 (0.758) & 27.41 (0.476) \\
Normal & 2 & 20 & Tri-diagonal & 199.30 (6.110) & 190.31 (5.435) & 20.05 (0.610) & 29.10 (0.490) \\
Normal & 2 & 20 & Exponential & 197.52 (5.880) & 198.82 (5.691) & 24.04 (0.734) & 26.81 (0.435) \\
Normal & 5 & 5 & Tri-diagonal & 204.50 (9.416) & 198.12 (5.554) & 20.22 (0.579) & 27.86 (0.461) \\
Normal & 5 & 5 & Exponential & 159.62 (8.229) & 194.84 (5.631) & 24.05 (0.728) & 26.20 (0.440) \\
Normal & 5 & 20 & Tri-diagonal & 175.98 (8.595) & 203.60 (5.518) & 22.42 (0.661) & 28.15 (0.490) \\
Normal & 5 & 20 & Exponential & 184.53 (8.824) & 195.35 (5.491) & 24.35 (0.718) & 27.19 (0.459) \\
Non-normal & 2 & 5 & Tri-diagonal & 160.79 (10.106) & 195.68 (5.555) & 21.16 (0.648) & 47.58 (0.929) \\
Non-normal & 2 & 5 & Exponential & 193.44 (10.800) & 208.74 (5.761) & 24.00 (0.746) & 40.47 (0.726) \\
Non-normal & 2 & 20 & Tri-diagonal & 149.26 (9.814) & 200.90 (5.598) & 20.34 (0.626) & 45.90 (0.869) \\
Non-normal & 2 & 20 & Exponential & 181.07 (10.550) & 193.64 (5.278) & 23.34 (0.710) & 42.59 (0.813) \\
Non-normal & 5 & 5 & Tri-diagonal & 190.36 (10.744) & 195.06 (5.407) & 20.66 (0.633) & 45.75 (0.852) \\
Non-normal & 5 & 5 & Exponential & 192.76 (10.791) & 185.25 (5.259) & 25.85 (0.794) & 39.06 (0.709) \\
Non-normal & 5 & 20 & Tri-diagonal & 178.38 (10.501) & 200.02 (5.537) & 20.61 (0.639) & 44.73 (0.869) \\
Non-normal & 5 & 20 & Exponential & 191.16 (10.760) & 190.57 (5.134) & 26.90 (0.839) & 41.36 (0.805) \\
\bottomrule
\end{tabular}}
\end{table}
\begin{table}[tbp]
\caption{Sine shift: $\ARL_1$ for various settings of simulated processes with target $\ARL_0=200$ (standard errors in parentheses). }
\label{tab:sine arl1}
\centering
\resizebox{\textwidth}{!}{
\begin{tabular}{cccc  cccc} \toprule
{Distribution} & {Rank} & {Lag} & {Covariance} 
 & \multicolumn{1}{c}{MEWMA} & \multicolumn{1}{c}{MGLR} & \multicolumn{1}{c}{ST-SSD}  & \multicolumn{1}{c}{DFLIM}\\ 
  \cmidrule(lr){1-4}  \cmidrule(lr){5-8}   
Normal & 2 & 5 & Tri-diagonal & 201.12 (5.933) & 199.93 (5.511) & 31.12 (0.996) & 5.29 (0.081) \\
Normal & 2 & 5 & Exponential & 209.35 (6.448) & 198.88 (5.576) & 37.34 (1.169) & 16.17 (0.244) \\
Normal & 2 & 20 & Tri-diagonal & 191.17 (5.963) & 187.88 (5.256) & 32.41 (1.047) & 5.54 (0.086) \\
Normal & 2 & 20 & Exponential & 194.94 (6.011) & 203.80 (5.514) & 35.90 (1.062) & 16.45 (0.263) \\
Normal & 5 & 5 & Tri-diagonal & 210.28 (9.451) & 192.23 (5.592) & 29.65 (0.923) & 5.45 (0.083) \\
Normal & 5 & 5 & Exponential & 176.80 (8.450) & 187.72 (5.215) & 35.86 (1.076) & 16.31 (0.273) \\
Normal & 5 & 20 & Tri-diagonal & 172.29 (8.513) & 203.99 (5.598) & 30.92 (0.909) & 5.54 (0.087) \\
Normal & 5 & 20 & Exponential & 195.50 (9.001) & 192.17 (5.401) & 35.05 (1.029) & 16.5 (0.266) \\
Non-normal & 2 & 5 & Tri-diagonal & 156.80 (10.011) & 205.74 (5.920) & 35.32 (1.079) & 6.50 (0.097) \\
Non-normal & 2 & 5 & Exponential & 192.28 (10.775) & 197.62 (5.561) & 37.04 (1.116) & 16.87 (0.255) \\
Non-normal & 2 & 20 & Tri-diagonal & 165.11 (10.197) & 184.36 (5.086) & 35.08 (1.060) & 6.59 (0.098) \\
Non-normal & 2 & 20 & Exponential & 170.39 (10.327) & 186.27 (5.357) & 39.12 (1.249) & 16.68 (0.254) \\
Non-normal & 5 & 5 & Tri-diagonal & 183.17 (10.600) & 198.31 (5.558) & 33.06 (0.970) & 6.60 (0.096) \\
Non-normal & 5 & 5 & Exponential & 209.54 (11.097) & 187.94 (5.486) & 40.92 (1.226) & 17.27 (0.266) \\
Non-normal & 5 & 20 & Tri-diagonal & 218.33 (11.243) & 193.34 (5.370) & 34.40 (1.094) & 6.40 (0.100) \\
Non-normal & 5 & 20 & Exponential & 169.59 (10.309) & 203.86 (5.566) & 39.21 (1.262) & 16.41 (0.260) \\
\bottomrule
\end{tabular}}
\end{table}
\begin{table}[tbp]
\caption{Chessboard shift: $\ARL_1$ for various settings of simulated processes with target $\ARL_0=200$ (standard errors in parentheses). }
\label{tab:chessboard arl1}
\centering
\resizebox{\textwidth}{!}{
\begin{tabular}{cccc  cccc} \toprule
{Distribution} & {Rank} & {Lag} & {Covariance} 
 & \multicolumn{1}{c}{MEWMA} & \multicolumn{1}{c}{MGLR} & \multicolumn{1}{c}{ST-SSD}  & \multicolumn{1}{c}{DFLIM}\\ 
  \cmidrule(lr){1-4}  \cmidrule(lr){5-8}   
Normal & 2 & 5 & Tri-diagonal & 190.65 (6.146) & 205.80 (5.895) & 1.54 (0.030) & 1.70 (0.017) \\
Normal & 2 & 5 & Exponential & 193.27 (6.081) & 207.11 (5.744) & 1.78 (0.044) & 1.97 (0.018) \\
Normal & 2 & 20 & Tri-diagonal & 186.63 (6.145) & 184.59 (4.887) & 1.52 (0.031) & 1.69 (0.017) \\
Normal & 2 & 20 & Exponential & 199.48 (6.307) & 199.14 (5.624) & 1.74 (0.037) & 1.97 (0.018) \\
Normal & 5 & 5 & Tri-diagonal & 163.28 (8.942) & 203.00 (5.583) & 1.49 (0.030) & 2.28 (0.021) \\
Normal & 5 & 5 & Exponential & 135.63 (7.913) & 199.87 (5.557) & 1.76 (0.039) & 2.62 (0.024) \\
Normal & 5 & 20 & Tri-diagonal & 133.09 (7.833) & 192.75 (5.604) & 1.54 (0.031) & 2.28 (0.021) \\
Normal & 5 & 20 & Exponential & 146.96 (8.206) & 198.12 (5.508) & 1.77 (0.041) & 2.59 (0.025) \\
Non-normal & 2 & 5 & Tri-diagonal & 162.08 (10.129) & 207.74 (5.870) & 1.56 (0.030) & 2.47 (0.023) \\
Non-normal & 2 & 5 & Exponential & 171.71 (10.349) & 196.14 (5.447) & 1.75 (0.036) & 2.77 (0.026) \\
Non-normal & 2 & 20 & Tri-diagonal & 131.17 (9.328) & 197.06 (5.611) & 1.57 (0.031) & 2.53 (0.023) \\
Non-normal & 2 & 20 & Exponential & 157.46 (10.022) & 195.98 (5.507) & 1.91 (0.047) & 2.80 (0.028) \\
Non-normal & 5 & 5 & Tri-diagonal & 160.00 (10.088) & 195.82 (5.581) & 1.58 (0.031) & 2.52 (0.024) \\
Non-normal & 5 & 5 & Exponential & 169.59 (10.309) & 192.56 (5.322) & 1.82 (0.042) & 2.85 (0.029) \\
Non-normal & 5 & 20 & Tri-diagonal & 151.21 (9.872) & 199.70 (5.491) & 1.53 (0.032) & 2.55 (0.025) \\
Non-normal & 5 & 20 & Exponential & 145.62 (9.728) & 200.69 (5.775) & 1.86 (0.043) & 2.86 (0.030) \\
\bottomrule
\end{tabular}}
\end{table}

The DFLIM procedure consistently outperforms the MEWMA and MGLR procedures across all shift patterns. 
The MEWMA procedure detects shifts only in a few cases but with significant delay. MEWMA employs a profile monitoring technique, where the matrix is flattened into a long vector and segmented for separate handling. This approach risks losing spatial correlations due to flattening and segmentation. Another drawback is that each segment remains high-dimensional, making covariance estimation for each segment challenging. In our experiment, with 800 in-control data and segments of dimension 200, MEWMA's performance suffers due to poor marginal covariance estimation across segments. Additionally, MEWMA assumes temporal independence, which is invalid in our complex auto-correlations experiments.

The MGLR procedure reduces data dimensionality by defining ROIs, but this process can result in local information loss within each ROI.
Considering the nature of our shifts (sparse or alternating positive and negative), the mean of entries in each ROI tends to be close to zero. This cancellation of informative entries by taking the mean of each ROI in MGLR undermines its ability to detect a shift effectively.

The ST-SSD procedure is the most competitive baseline compared to DFLIM. In Table \ref{tab:sparse arl1}, DFLIM detects sparse shifts faster than ST-SSD, saving approximately 70 observations. Regarding the ring shift in Table \ref{tab:ring arl1}, under normal noise distribution, DFLIM performs slightly worse than ST-SSD, with a lag of less than 10 observations. However, with non-normal noise distribution, ST-SSD outperforms DFLIM by approximately 20 observations, although both achieve significantly smaller $\ARL_1$ values compared to MEWMA and MGLR. In Table \ref{tab:sine arl1}, both DFLIM and ST-SSD detect the chessboard shift almost instantly, showing negligible differences between them. For the chessboard shift in Table \ref{tab:chessboard arl1}, both DFLIM and ST-SSD procedures detect the shift nearly instantly, with a negligible difference between them.

The ST-SSD procedure employs a regression framework to decompose observations into three components: the in-control mean matrix $M_0$, the shift $A$, and the noise. Except for the sparse shift, ST-SSD achieves this decomposition effectively, resulting in successful detection. DFLIM incorporates static and dynamic statistics to construct monitoring statistics for change detection. In scenarios where a shift is algebraically similar to the in-control mean matrix, static features play a significant role in change detection, as seen in the chessboard shift of Figure \ref{fig:chessboard}. On the other hand, dynamic features dominate when a shift is algebraically different from the in-control mean matrix, as demonstrated in the sparse, ring, and sine shifts of Figures \ref{fig:sparse}-\ref{fig:sine}. Hence, DFLIM robustly and efficiently detects changes across various settings.

As stated in Theorem~\ref{thm:mean difference of nu}, the first-type features $\beta_{{i},t}$ consistently help achieve change detection when $M_0$ and $M_1$ are algebraically similar. This is empirically supported by the effectiveness of  DFLIM in detecting the chessboard shift, which resembles the in-control mean. On the other hand, the effectiveness of the second-type features $\gamma_{{i},t}$ becomes evident when dealing with large matrix dimensions, approaching the asymptotic theory outlined in Theorem~\ref{thm:mean difference of gamma}. Experimental results show that normally distributed noises often lead to smaller $\ARL_1$ compared to non-normal noises, likely due to slower convergence to the asymptotic theory associated with non-normal noises.

\begin{figure}[b!]
\centering
\resizebox*{0.8\linewidth}{!}{\includegraphics{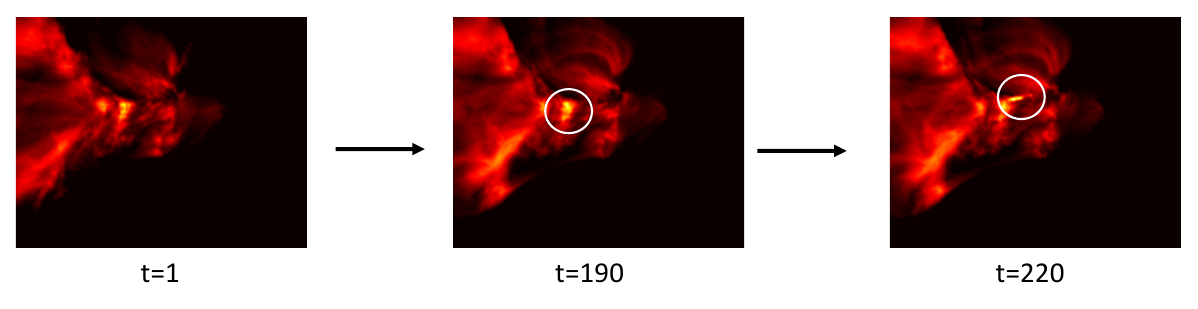}}
\caption{Solar flare images at $t=1,190,220$. The white circles mark outbursts.}
\label{fig:solarshift}
\end{figure}
\begin{figure}[b!]
\centering
\resizebox*{0.85\linewidth}{!}{\includegraphics{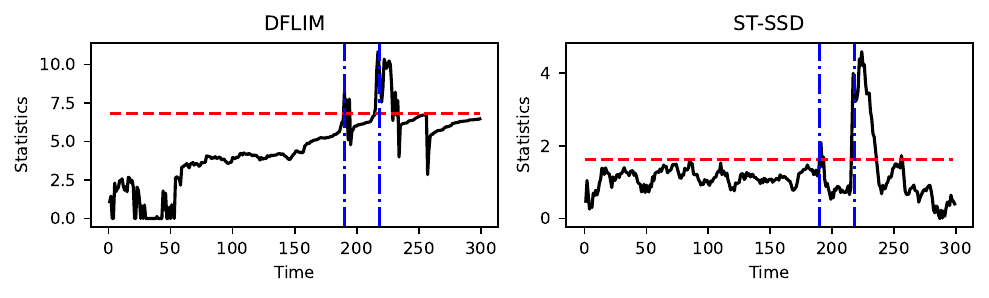}}
\caption{The monitoring process of DFLIM and ST-SSD on the solar flare images. Two outbursts occur around times 190 and 220, respectively, marked by vertical blue dashed lines. In each procedure, the control limit is indicated by a red dashed line, with the monitoring statistics shown in black.}
\label{fig:flare}
\end{figure}

\section{Real Data Experiments}\label{sec:real}
In this section, we apply DFLIM to real datasets to illustrate its broad applicability. More specifically, we analyze solar flare images in Section~\ref{subsec:solar} and stochastic textured surface images in Section~\ref{subsec:texture}. We compare DFLIM with ST-SSD, excluding MEWMA and MGLR, because their control limits cannot be determined with a single in-control sequence. 

We determine the control limit of DFLIM analytically by solving equation \eqref{formula:control limit} for $H$.
For ST-SSD, we determine its control limit using an empirical $(1-1/\ARL_0)$ quantile estimate of the in-control monitoring statistics \cite[Section~5.2]{yan2018real}.

\subsection{Solar flare outburst}
\label{subsec:solar}
In this example, we aim to detect solar flare outbursts. Figure \ref{fig:solarshift} shows solar flare images at times $t=1, 190, 220$. The solar flare outbursts are represented by the bright spots in the images, indicated by the white circles. Prior knowledge indicates that two outbursts happen around times 190 and 220, respectively. The first outburst around time 190 is relatively moderate, while the second one around time 220 is more intense. The sequence consists of 300 images, each represented by a $232$-by-$292$ matrix. We use the first 150 matrices as the training dataset and then perform monitoring on the entire sequence. 

Detecting solar flare outbursts presents several challenges. First, each image is high-dimensional, containing nearly 70,000 pixels. Second, the low-rank property is not inherently applicable to solar flare images. To address this issue, we employ a patch technique that breaks the data into {\it patches} to promote the low-rank structure \citep{lowe1999object}. Each patched image exhibits a numerical rank of $10$. Third, the dynamics of the changes are complex due to multiple change points and slowly evolving backgrounds. Specifically, many time points around 190 and 220 experience outbursts. To handle these multiple outbursts, we restart monitoring once an alarm is raised. Additionally, the changes in the data involve not only intense outbursts but also slow shifts in the background. To address the dynamic background, we process the data by taking consecutive differences of the images after the patch technique is performed, and we set the target $\ARL_0 = 50,000$. 

Figure \ref{fig:flare} shows the results of ST-SSD and DFLIM to the solar flare images. 
ST-SSD effectively detects both moderate and intense
outbursts without triggering false alarms. However, we observe that the monitoring statistic of ST-SSD is very close to its control limit around $t=70, 80, 100$, corresponding to periods of normal solar flare activity. DFLIM demonstrates superior performance compared to ST-SSD, effectively identifying the outbursts around $t = 190$ and $t=220$ with the detection statistics away from control limits prior to $t=190$.

\subsection{Stochastic textured surface monitoring}
\label{subsec:texture}

\begin{figure}[b!]
\centering
\resizebox*{0.8\linewidth}{!}{\includegraphics{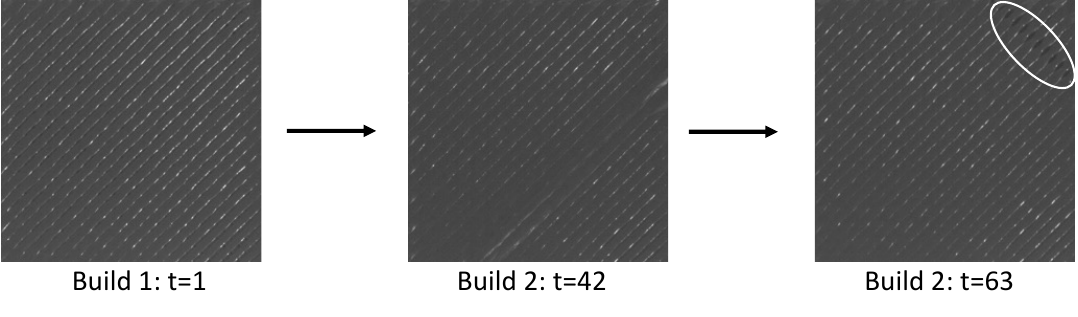}}
\caption{The material extrusion process is depicted at  $t=1,42,63$. At $t=42$, there is a transition from the first to the second build. The bead orientation is $45^\circ$, and the white ellipse marks a defect.}
\label{fig:textured-monitor}
\end{figure}
\begin{figure}[b!]
\centering
\resizebox*{0.85\linewidth}{!}{\includegraphics{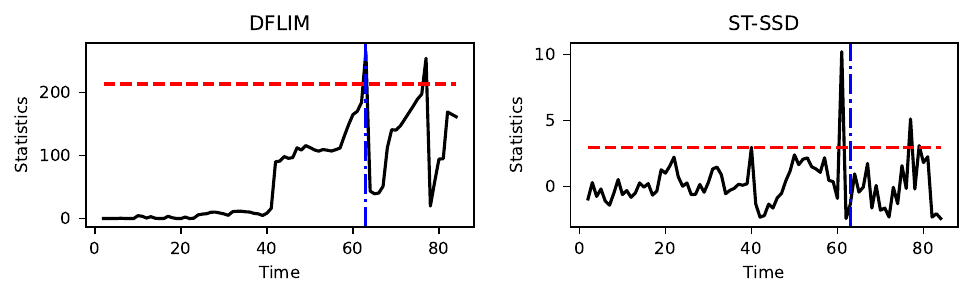}}
\caption{The monitoring process of DFLIM and ST-SSD on the 3D printing textured surface images with bead orientation at $45^\circ$. A build-to-build transition happens at $t=42$, while a defect is introduced at $t=63$, marked by vertical blue dashed lines in both figures. In each procedure, the control limit is indicated by a red dashed line, with the monitoring statistics shown in black.}
\label{fig:textured_monitor_45}
\end{figure}

\begin{figure}[t!]
\centering
\resizebox*{0.85\linewidth}{!}{\includegraphics{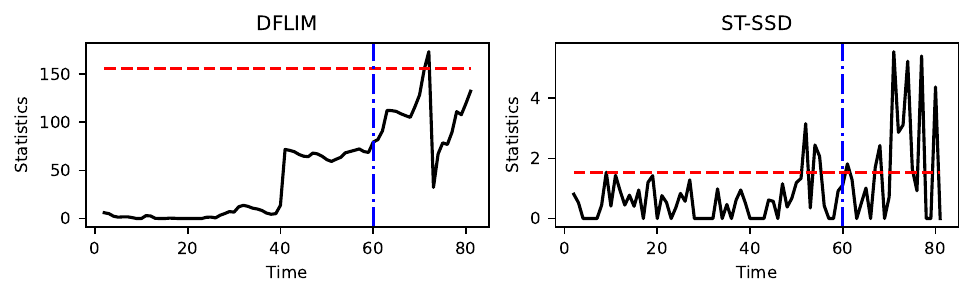}}
\caption{The monitoring process of DFLIM and ST-SSD on the 3D printing textured surface images with bead orientation at $135^\circ$. A build-to-build transition happens at $t=41$, while a defect is introduced at $t=60$, marked by vertical blue dashed lines in both figures. In each procedure, the control limit is indicated by a red dashed line, with the monitoring statistics shown in black.}
\label{fig:textured_monitor_135}
\end{figure}

The online monitoring of additive manufacturing processes, commonly referred to as 3D printing, has drawn increasing attention due to its potential to reduce material waste. This example involves monitoring the production process of a parallelepiped ($20 \times 20 \times 20$ mm)  using fused filament fabrication. 

\cite{Caltanissetta2023} use data from a sequential process to print two parallelepipeds. We use the same dataset and, refer to them as Build 1 (representing the process of building the first parallelepiped) and Build 2 (representing the process of building the second parallelepiped) hereafter. Build 1 is in control, while defects are intentionally introduced into Build 2 in the middle of the printing process.

During the printing process, layerwise images are captured by a video-imaging system installed above the printing area. To optimize bonding between consecutive layers, it is recommended that the material extrusion direction be rotated iteratively. Consequently, the captured images are categorized into two types based on bead orientations: $45^\circ$ and $135^\circ$. Each build consists of two sequences labeled as $45^\circ$ and $135^\circ$, respectively, and these sequences are treated separately. Despite both sequences originating from the same build process, some images are occasionally skipped and not captured. Therefore, the index $t$ in this example corresponds to the index of captured images and does not directly translate into time. The sequence with bead orientation $45^\circ$ ($135^\circ$) consists of $84$ ($81$) images. For Build 1 (Build 2), the sequence ends at $t=42$ ($t=41$), after which defects are introduced at $t=63$ ($t=60$). Each image is represented as a $250$-by-$250$ matrix. During monitoring, we utilize images from Build 1 to set up control limits and implementation parameters, then apply DFLIM and ST-SSD to the all the images. 

Unlike the solar flare images, we do not need the patch technique for this dataset because the in-control data naturally exhibits low-rank properties due to the aligned paths of material extrusions. However, this dataset presents challenges similar to those in the solar flare images, including high dimensionality and evolving backgrounds. Additionally, two more challenges arise with this dataset. First, the training sample size is small, consisting of only about 40 images. Second, the dataset exhibits build-to-build variability, which stems from dynamic factors in the printing area, such as changing illumination conditions during transitions between builds. Technically, the shift between builds could be considered a change point, but it is undesirable to detect this inter-build shift. Instead, we aim to detect a shift caused by actual defects, but the build-to-build variability increases the risk of false alarms. 

For monitoring, we still take consecutive differences of images and restart the process upon detecting a change point.  We set $\ARL_0 = 4000$ for both bead orientations, suggesting approximately half a month between consecutive printer overhauls. Figure~\ref{fig:textured_monitor_45} displays the monitoring process for bead orientation $45^\circ$ while Figure~\ref{fig:textured_monitor_135} shows the case for bead orientation $135^\circ$. In both cases, ST-SSD detects the change at the true change point but raises false alarms. 
DFLIM immediately detects the defect without raising a false alarm at the build-to-build transition in Figure~\ref{fig:textured_monitor_45}. DFLIM still detects the defect in Figure~\ref{fig:textured_monitor_135} but has a delay of $12$ images, roughly equivalent to one minute in real time (without considering skipped images).

\section{Conclusion}\label{sec:conclusion}
In this paper, we propose a distribution-free monitoring procedure named DFLIM that can address the challenges posed by modern image data, including complex spatial-temporal dependence, non-normality, and high dimensionality with low-dimensional structure. We provide a comprehensive theoretical discussion on the detection ability and the behavior of  $\ARL_0$ and $\ARL_1$ for the proposed procedure under reasonable assumptions. Extensive simulations are conducted using various distributions, ranks, and spatial-temporal correlation structures to validate the generality of DFLIM. Additionally, we apply DFLIM to two real datasets, solar flare datasets, and additive manufacturing datasets, to demonstrate its applicability to real-world scenarios.

A future direction is to extend the method to approximately low-rank matrices. While Assumption~\ref{assump:low_rank} assumes a low-rank in-control mean matrix \(M_0\), in practice, its estimate \(\widehat{M}_0\) often deviates from this structure. Significant noise in \(\widehat{M}_0\) may lead to overestimating the rank \(r\), introducing extra projected components in \(y_t\) and increasing false negatives or false alarms. Analyzing the extent of this deviation could guide a more appropriate selection of \(r\), enhancing DFLIM's performance.

\section*{Acknowledgement}
This work is partially supported by an NSF CAREER CCF-1650913, NSF DMS-2134037, CMMI-2015787, CMMI-2112533, DMS-1938106, DMS-1830210, and the Coca-Cola Foundation.

\bibliographystyle{plain}
\bibliography{main}

\appendix
\section{Overlapping weighted Cram\'er-von Mises (CvM) estimator}\label{sec:CvM}
\begin{algorithm}[t!]
\caption{The overlapping weighted CvM estimator}
\label{algo:CvM}
\textbf{Input:} In-control monitoring statistics  $\{T_t:t=1,\ldots,n\}$, batch size $m$.
\begin{algorithmic}[1] 
\State For $i=1,\ldots, n-m+1$, calculate
$$
C_i = \frac{1}{m}\sum_{j=1}^m g\left(\frac{j}{m}\right)\frac{j^2}{m}\left(\bar{T}_{i,j}-\bar{T}_i \right)^2, 
$$
where the function $g(t) = -24+150t-150t^2$, the partial batch mean $\bar{T}_{i,j} = \frac{1}{j}\sum_{j'=1}^j T_{i+j'}$ and the batch mean $\bar{T}_{i} = \frac{1}{m}\sum_{j=1}^m T_{i+j}$.

\State The CvM estimator is the average over all CvM estimators from the  batches, namely 
$$
\Omega_0^2 = \frac{1}{n-m+1}\sum_{i=1}^{n-m+1} C_i.
$$

 \end{algorithmic}
\end{algorithm}

The CvM estimator is proposed by \cite{alexopoulos2007overlapping}. An expedient strategy for determining the batch size $m$ could ensure approximate independence among the batch means. Any applicable statistical test for independence can be employed for this purpose. 
n our study, we adopt a fixed batch size of $m=50$ across all experimental configurations. For the purpose of self-containedness, we show the computation process of CvM estimator in Algorithm~\ref{algo:CvM}.

\section{Proofs} \label{app:proof}

\begin{proof}[Proof of Theorem~\ref{thm:mean difference of gamma}]
    We begin by studying the statistical behaviors of in-control and out-of-control $\gamma_{{i},t}$ separately. 
\begin{itemize}
    \item For the in-control case, Lemma \ref{lem:in-control gamma} implies that $\gamma_{{i},t} = \operatorname{o}(\sqrt{p_1p_2})$ almost surely. 
    \item For an out-of-control case, Lemma \ref{lem:out-of-control gamma} implies 
    $$
    \gamma_{{i},t} = \sqrt{p_1p_2}\bar{\rho}_{i} + z_{i}+ \operatorname{o}\left(\sqrt{p_1p_2}\right)+\operatorname{o}_p\left(1\right),
    $$
\end{itemize}
where $\operatorname{o}_p\left(1\right)$ represents a random variable $Z$ satisfying $\lim_{p_1,p_2\to\infty} \bbP \left(|Z|\ge\epsilon\right)=0$ for any positive constant $\epsilon$. Then, we have 
$$
\frac{\E_1 [\gamma_{{i},t}]-\E_0 [\gamma_{{i},t}]}{\sqrt{p_1p_2}} = \bar{\rho}_{i} + \operatorname{o}\left(1\right).
$$
\end{proof}

\begin{proof}[Proof of Theorem~\ref{lem:beyond iid gain in T square}]
Without specifying the probability measure, we can decompose $T_t$ into the following components. 
    \begin{equation*}
        \begin{aligned}
            &\E_\upsilon \sqbrack{T_t} = \E_\upsilon \sqbrack{\left( y_t - \E_0 [y_t] \right)^\top  \cov_0^{-1}(y_t) \left( y_t - \E_0[y_t] \right)}\\
            & = \E_\upsilon \sqbrack{\left( y_t - \E_\upsilon [y_t] + \E_\upsilon [y_t] - \E_0 [y_t] \right)^\top \cov_0^{-1}(y_t) \left( y_t - \E_\upsilon [y_t] + \E_\upsilon [y_t] - \E_0 [y_t]\right)}\\
            & = \tr\cirbrack{\cov_0^{-1}(y_t) \cov_\upsilon^{-1}(y_t) }
             + {\left( \E_\upsilon [y_t] - \E_0 [y_t] \right)^\top  \cov_0^{-1}(y_t) \left(\E_\upsilon [y_t] - \E_0 [y_t] \right)}, \quad \upsilon = 0,1. 
        \end{aligned}
    \end{equation*}
    Now we examine the difference between the in-control and out-of-control expectations:
    \begin{equation*}
        \begin{aligned}
            \E_1\sqbrack{T_t} - \E_0\sqbrack{T_t} &=
            \tr\cirbrack{\cov_0^{-1}(y_t) \cov_1^{-1}(y_t) }
             \\
             &\quad + {\left( \E_1 [y_t] - \E_0 [y_t] \right)^\top  \cov_0^{-1}(y_t) \left(\E_1 [y_t] - \E_0 [y_t] \right)}-2r\\
             &= \tr\cirbrack{\Sigma^{-1} \tilSig  }
             + \delta^\top  \Sigma^{-1} \delta -2r.
        \end{aligned}
    \end{equation*}
    By utilizing the property of the Schur complement, we can express the inverse of $\Sigma$ in the following form:
    \begin{equation*}
        \begin{aligned}
            \Sigma^{-1} & 
            = \begin{bmatrix}
            \Sigma_\beta^{-1}+\Sigma_\beta^{-1}P\left(\Sigma/\Sigma_\beta\right)^{-1}P^\top\Sigma_{\beta}^{-1} & 
             -\Sigma_\beta^{-1}P \left(\Sigma/\Sigma_\beta\right)^{-1} \\[6pt]
            -\left(\Sigma/\Sigma_\beta\right)^{-1}P^\top \Sigma_\beta^{-1}  & 
            \left(\Sigma/\Sigma_\beta\right)^{-1}
            \end{bmatrix},
        \end{aligned}   
    \end{equation*}
    where $\Sigma/\Sigma_\beta = \Sigma_\gamma-P^\top \Sigma_\beta^{-1} P $ denotes the Schur complement of the block $\Sigma_\beta$ within the matrix $\Sigma$.     For $\tr\cirbrack{\Sigma^{-1} \tilSig  }$, we have
    $$
    \begin{aligned}
        &\Sigma^{-1}\tilSig = 
            \begin{bmatrix}
            \Sigma_\beta^{-1}+\Sigma_\beta^{-1}P\left(\Sigma/\Sigma_\beta\right)^{-1}P^\top\Sigma_{\beta}^{-1} & 
             -\Sigma_\beta^{-1}P \left(\Sigma/\Sigma_\beta\right)^{-1} \\[6pt]
            -\left(\Sigma/\Sigma_\beta\right)^{-1}P^\top \Sigma_\beta^{-1}  & 
            \left(\Sigma/\Sigma_\beta\right)^{-1}
            \end{bmatrix} 
            \begin{bmatrix}
            \Sigma_{\beta} & \tilP \\[6pt]
            \tilP^\top & \tilSig_{\gamma} \\
            \end{bmatrix}\\[6pt]
            &\,= 
            \resizebox{\linewidth}{!}{
            $\begin{bmatrix}
            I_r+\Sigma_\beta^{-1}P\left(\Sigma/\Sigma_\beta\right)^{-1}P^\top-\Sigma_\beta^{-1}P \left(\Sigma/\Sigma_\beta\right)^{-1}\tilP^\top & 
             \left(\Sigma_\beta^{-1}+\Sigma_\beta^{-1}P\left(\Sigma/\Sigma_\beta\right)^{-1}P^\top\Sigma_{\beta}^{-1}\right)\tilP-\Sigma_\beta^{-1}P \left(\Sigma/\Sigma_\beta\right)^{-1}\tilSig_{\gamma}\\[6pt]
             -\left(\Sigma/\Sigma_\beta\right)^{-1}P^\top+\left(\Sigma/\Sigma_\beta\right)^{-1}\tilP^\top & 
            -\left(\Sigma/\Sigma_\beta\right)^{-1}P^\top \Sigma_\beta^{-1}\tilP+\left(\Sigma/\Sigma_\beta\right)^{-1}\tilSig_\gamma
            \end{bmatrix}$
            }.
    \end{aligned}
    $$
    Then, its trace becomes
    $$
    \begin{aligned}
        &\tr\cirbrack{\Sigma^{-1} \tilSig} \\
        &= r+\tr\left(\Sigma_\beta^{-1}P\left(\Sigma/\Sigma_\beta\right)^{-1}P^\top-\Sigma_\beta^{-1}P \left(\Sigma/\Sigma_\beta\right)^{-1}\tilP^\top-\left(\Sigma/\Sigma_\beta\right)^{-1}P^\top \Sigma_\beta^{-1}\tilP+\left(\Sigma/\Sigma_\beta\right)^{-1}\tilSig_\gamma\right)\\
        & = r+\tr\left[\left(\Sigma/\Sigma_\beta\right)^{-1}\left(
        P^\top\Sigma_\beta^{-1}P-
       \tilP^\top\Sigma_\beta^{-1}P -
        P^\top \Sigma_\beta^{-1}\tilP+
       \tilSig_\gamma\right)\right]\\
       & = r+\tr\left[\left(\Sigma/\Sigma_\beta\right)^{-1}\left(
        P^\top\Sigma_\beta^{-1}P-
       \tilP^\top\Sigma_\beta^{-1}P -
        P^\top \Sigma_\beta^{-1}\tilP+
        \tilP^\top \Sigma_\beta^{-1}\tilP-
        \tilP^\top \Sigma_\beta^{-1}\tilP+
       \tilSig_\gamma\right)\right]\\
          & = r+\tr\left\{\left(\Sigma/\Sigma_\beta\right)^{-1}\left\{\left[
        \Sigma_{\beta}^{-1/2}\left(P-\tilP\right)\right]^\top\left[
        \Sigma_{\beta}^{-1/2}\left(P-\tilP\right)\right]+
       \tilSig/\Sigma_{\beta}\right\}\right\}\\
        & = r+\tr\left\{\left(\Sigma/\Sigma_\beta\right)^{-1}\left[
        \Sigma_{\beta}^{-1/2}\left(P-\tilP\right)\right]^\top\left[
        \Sigma_{\beta}^{-1/2}\left(P-\tilP\right)\right]\right\}+\tr\left[\left(\Sigma/\Sigma_\beta\right)^{-1}
       \left(\tilSig/\Sigma_{\beta}\right)\right],
    \end{aligned}
    $$
    where $\tilSig/\Sigma_\beta = \tilSig_\gamma-\tilP^\top \Sigma_\beta^{-1} \tilP$. 
    For $\delta^\top\Sigma^{-1}\delta$, we have
    $$
    \begin{aligned}
        &\delta^\top\Sigma^{-1}\delta \\
        &= 
        \begin{bmatrix}
            \delta_\beta^\top & \delta_\gamma^\top
        \end{bmatrix}
        \begin{bmatrix}
            \Sigma_\beta^{-1}+\Sigma_\beta^{-1}P\left(\Sigma/\Sigma_\beta\right)^{-1}P^\top\Sigma_{\beta}^{-1} & 
             -\Sigma_\beta^{-1}P \left(\Sigma/\Sigma_\beta\right)^{-1} \\[6pt]
            -\left(\Sigma/\Sigma_\beta\right)^{-1}P^\top \Sigma_\beta^{-1}  & 
            \left(\Sigma/\Sigma_\beta\right)^{-1}
        \end{bmatrix}
        \begin{bmatrix}
            \delta_\beta \\[6pt]
            \delta_\gamma
        \end{bmatrix}\\[6pt]     &\,=\delta_\beta^\top\left[\Sigma_\beta^{-1}+\Sigma_\beta^{-1}P\left(\Sigma/\Sigma_\beta\right)^{-1}P^\top\Sigma_{\beta}^{-1}\right]\delta_\beta\\[6pt]
        &\,\quad -\delta_\beta^\top\Sigma_\beta^{-1}P \left(\Sigma/\Sigma_\beta\right)^{-1}\delta_\gamma -\delta_\gamma^\top\left(\Sigma/\Sigma_\beta\right)^{-1}P^\top \Sigma_\beta^{-1}\delta_\beta + \delta_\gamma^\top\left(\Sigma/\Sigma_\beta\right)^{-1}\delta_\gamma\\[6pt]
        &\,=\left\|\Sigma_\beta^{-1/2}\delta_\beta\right\|^2 + \left\|\left(\Sigma/\Sigma_\beta\right)^{-1/2}\left(P^\top\Sigma_{\beta}^{-1}\delta_\beta+\delta_\gamma\right)\right\|^2.
    \end{aligned}
    $$

    Now, we substitute $\tr\cirbrack{\Sigma^{-1} \tilSig  }$ and $\delta^\top\Sigma^{-1}\delta$ back into the expression for the shift size of $T_t$. Finally, we have
    $$
    \begin{aligned}
        &\E_1\sqbrack{T_t} - \E_0\sqbrack{T_t} 
         \\
         &= \tr\cirbrack{\Sigma^{-1} \tilSig  }
         + \delta^\top  \Sigma^{-1} \delta -2r\\
         &\,=\tr\left[\left(\Sigma/\Sigma_\beta\right)^{-1}
       \left(\tilSig/\Sigma_{\beta}\right)\right]-r+\tr\left\{\left(\Sigma/\Sigma_\beta\right)^{-1}\left[
        \Sigma_{\beta}^{-1/2}\left(P-\tilP\right)\right]^\top\left[
        \Sigma_{\beta}^{-1/2}\left(P-\tilP\right)\right]\right\}\\ 
       &\, \quad + \left\|\Sigma_\beta^{-1/2}\delta_\beta\right\|^2 + \left\|\left(\Sigma/\Sigma_\beta\right)^{-1/2}\left(P^\top\Sigma_{\beta}^{-1}\delta_\beta+\delta_\gamma\right)\right\|^2.
    \end{aligned}
    $$
\end{proof}

\end{document}